\documentclass[journal]{IEEEtran}

\usepackage{multicol}
\usepackage{multirow}
\usepackage{longtable}
\usepackage{subcaption}

\usepackage{balance}
\usepackage{cite}

\usepackage{algorithm}
\usepackage{algpseudocode}

\ifCLASSINFOpdf
   \usepackage[pdftex]{graphicx}

\DeclareGraphicsExtensions{.pdf,.jpeg,.png}
\else
 
\fi

\usepackage{mathrsfs,amsmath}
\usepackage{amssymb}

\usepackage{amsthm,hyperref,enumerate}
\newtheorem{theorem}{Theorem}

\theoremstyle{definition}
\newtheorem{defn}{Definition}

\theoremstyle{remark}
\newtheorem{lemma}{Lemma}

\theoremstyle{plain}
\newtheorem{remark}{Remark}
 \newtheorem{assumption}{Assumption}

\usepackage{ragged2e}

\usepackage[amssymb]{SIunits}
\hypersetup{
	colorlinks=true,
	linkcolor=blue,
	filecolor=blue,      
	urlcolor=blue,
}

\hyphenation{op-tical net-works semi-conduc-tor}

\begin{document}

\title{Distributed Semi-global Output Feedback Formation Maneuver Control of High-order Multi-agent Systems}

\author{Xu Fang, Lihua Xie,~\IEEEmembership{Fellow,~IEEE}
\thanks{This work was supported by ST Engineering-NTU Corporate Lab under its NRF Corporate Lab @ University Scheme (Singapore) and Projects of Major International (Regional) Joint Research Program under NSFC (China) Grant no. 61720106011. (Corresponding author: Lihua Xie.)}
\thanks{
X. Fang and L. Xie are with the School of Electrical and Electronic Engineering, Nanyang Technological University, Singapore. (E-mail: fa0001xu@e.ntu.edu.sg; 
elhxie@ntu.edu.sg).}
}

\maketitle

\begin{abstract}
This paper addresses the formation maneuver control problem of leader-follower multi-agent systems with high-order integrator dynamics. 
A distributed output feedback formation maneuver controller is proposed to achieve desired maneuvers so that
the scale, orientation, translation, and shape of formation can be manipulated continuously, where the followers do not need to know or estimate the time-varying maneuver parameters only known to the leaders. Compared with existing
relative-measurement-based formation maneuver control, the advantages of the proposed method are that it is output (relative output) feedback based and shows how to realize different types of formation shape. In addition, it can be applied to non-generic and non-convex nominal configurations and the leaders are allowed to be maneuvered. It is worth noting that
the proposed method can 
also be extended to general linear multi-agent systems under some additional conditions. 
The theoretical results are demonstrated by a simulation example.
\end{abstract}

\begin{IEEEkeywords}
Distributed control, output feedback formation maneuver control, high-order integrator dynamics
\end{IEEEkeywords}

\IEEEpeerreviewmaketitle

\section{Introduction}

\IEEEPARstart{M}{ulti-agent} formation control 
is applicable to the field of engineering such as load transportation, cooperative localization, and area exploration \cite{  wang2018distributed, deng2019cooperative,  wang2017distributed}. The formation shape control steers mobile agents to a desired time-invariant formation \cite{ suttner2018formation, li2021adaptive}. Different from formation shape control,
the formation maneuver control aims to maneuver the entire group of agents to change the scale, orientation, translation, and shape of formation continuously.

To tackle the formation maneuver control problem, mobile agents are usually divided into leader group and follower group. The leaders  determine the time-varying maneuver parameters such as scale, orientation, translation, and shape of formation, while the followers try to form the desired formation. 
To form a target time-varying formation, the existing methods are divided into two categories: (\text{\romannumeral1})
Each agent can access the  maneuver parameters;  (\text{\romannumeral2}) Only the leaders have access to the maneuver parameters. When each agent can access the maneuver parameters, the formation maneuver control becomes a consensus tracking or containment control problem \cite{wen2016consensus, zhang2020time,dong2017time}, where agents collaboratively keep a time-varying formation. For example, in the containment-control-based time-varying formation \cite{dong2017time}, all agents know the time-varying maneuver parameters.
In some application scenarios, only the leaders have access to
the time-varying maneuver parameters. For this case,
estimator can be designed for each follower to obtain the formation maneuver parameters \cite{yang2019stress, park2015formation}. However,
the performance of the controllers
will be influenced by the estimation errors, and the estimators consume computing and communication resources. To reduce consumption of computing and communication resources,
the second approach is to explore the
invariance of inter-agent constraints, such as distances and relative bearings, of the desired formation. The distance-based formation maneuver control can realize the rotational and translational maneuvers \cite{de2016distributed}. Similarly, the scaling and translational maneuvers can be achieved by bearing-based methods \cite{zhao2015translational}.

To change the scale, orientation, and translation of formation
without the need of designing the estimators for the followers to estimate the time-varying maneuver parameters only known to the leaders, some estimator-free-based methods are proposed recently, e.g,   barycentric coordinate \cite{ han2017fobarycentric, han2015three}, stress matrix \cite{zhao2018affine, li2020layered, chen2020distributed, zhi2021leader,  xu2018affinel,  xu2020affine, xu2020affinede, lin2021unified,zhou2022distributed,zhu2022completely,gao2022practical,zhu2022distributed}, and complex Laplacian \cite{han2015formation,lin2014distributed}. 
However, there are some limitations in the existing estimator-free-based methods \cite{zhao2018affine, li2020layered, chen2020distributed, xu2018affinel,  xu2020affine, han2017fobarycentric,  han2015formation,lin2014distributed, han2015three, zhi2021leader, lin2021unified, xu2020affinede,zhou2022distributed,zhu2022completely,gao2022practical,zhu2022distributed}. First, 
the complex-Laplacian-based methods are only applicable to 2-D space \cite{lin2014distributed,han2015formation}. Second, the works in \cite{zhi2021leader, lin2021unified, zhao2018affine,  xu2020affinede,  han2017fobarycentric, han2015three, lin2014distributed,han2015formation, li2020layered,zhou2022distributed,zhu2022completely,gao2022practical,zhu2022distributed } can only be applied in low-order or Euler–Lagrange multi-agent systems.
Third, the approaches in \cite{zhao2018affine, li2020layered, chen2020distributed, xu2018affinel,  xu2020affine,han2017fobarycentric, han2015formation, han2015three, zhi2021leader, lin2021unified, xu2020affinede,zhou2022distributed,zhu2022completely,gao2022practical,zhu2022distributed} require a convex or generic nominal configuration, i.e., the nominal followers are in a convex hull formed by the nominal leaders, or any three nominal agents are not on the same line and any four nominal agents are not on the same plane. For instance, the affine localizability of a nominal formation is guaranteed if the nominal configuration of formation is generic \cite{zhao2018affine, chen2020distributed, xu2018affinel,xu2020affine,xu2020affinede, li2020layered, zhi2021leader, lin2021unified,zhou2022distributed,zhu2022completely,gao2022practical,zhu2022distributed}. Although it is conjectured in \cite{zhao2018affine}
that the affine localizability of a nominal formation may also be guaranteed by non-generic configurations, 
no theoretical proof is given.  Fourth, the existing results \cite{zhao2018affine, li2020layered, chen2020distributed, xu2018affinel,  xu2020affine, han2017fobarycentric,  han2015formation, han2015three, zhi2021leader, lin2021unified, xu2020affinede,lin2014distributed,zhou2022distributed,zhu2022completely,gao2022practical,zhu2022distributed} require the relative motion states rather than the measured (relative) outputs to achieve their control objectives.
Fifth, there is no controller for the leaders in
\cite{xu2018affinel, xu2020affine, xu2020affinede, han2017fobarycentric,zhu2022completely,gao2022practical,zhu2022distributed}, which limits the maneuverability of formation, and the tracking errors of the formation maneuver control in \cite{xu2018affinel} can only converge to a bounded
compact set if the positions of the leaders are time-varying.

"High-order integrator dynamics" in this article refer to the $m$th order integrator dynamics, where $m$ can be any positive integer.
The work in \cite{fang2020tc} is only applicable to low-order multi-agent systems, which cannot be trivially extended to  multi-agent systems with higher-order integrator dynamics.  Further, it
requires the relative motion states of the agents. Although the relative positions may be measured by vision technology,
it is difficult to measure the relative velocities and
relative accelerations in real applications, let alone the higher-order relative motion states in higher-order multi-agent systems. Motivated by its limitations,  a distributed algorithm is proposed that can use outputs or relative outputs to achieve formation maneuver control for multi-agent systems with high-order integrator dynamics. Compared with existing results in \cite{zhao2018affine, li2020layered, chen2020distributed, xu2018affinel,  xu2020affine, han2017fobarycentric,  han2015formation, han2015three, zhi2021leader, lin2021unified, xu2020affinede, fang2020tc,lin2014distributed,zhou2022distributed,zhu2022completely,gao2022practical,zhu2022distributed} which require the relative motion states to achieve their control objectives, the first advantage is that it is output (relative output) feedback based approach. The second advantage is that it is applicable to both non-generic and non-convex nominal configurations, where the proposed theory is different from that in \cite{zhao2018affine, li2020layered, chen2020distributed, xu2018affinel,  xu2020affine, han2017fobarycentric,  han2015formation, han2015three, zhi2021leader, lin2021unified, xu2020affinede, fang2020tc,lin2014distributed,zhou2022distributed,zhu2022completely,gao2022practical,zhu2022distributed}. Although the shape change of formation is discussed in the affine-theory-based methods \cite{zhao2018affine, chen2020distributed, xu2018affinel,xu2020affine,xu2020affinede, li2020layered, zhi2021leader, lin2021unified,zhou2022distributed,zhu2022completely,gao2022practical,zhu2022distributed}, their shape change is usually "shearing and colinearity". Thus,
the third advantage is that the proposed method shows how to realize different types of formation shape by only tuning a formation shape parameter. In addition, the proposed method can be extended to general linear multi-agent systems under some additional conditions. It is worth noting that the conventional output feedback formation control methods \cite{ wang2018distributed,deng2019cooperative,wang2017distributed} with Laplacian matrix cannot be trivially extended to the discussed case. The edge weights of a conventional Laplacian matrix are
generally nonnegative, but the edge weights in the proposed
method are designed to be either negative or nonnegative. The main contributions are summarized as below:
\begin{enumerate}[(i)]
\item A rigidity-theory-based distributed semi-global output feedback formation maneuver control algorithm for multi-agent systems with high-order integrator dynamics is proposed, where the followers do not need to know or estimate the time-varying maneuver parameters only known to the leaders. 
\item The proposed method is applicable to both non-generic and non-convex nominal configurations, and allows leaders to maneuver to achieve the desired formation. 
\item The proposed method can
be applied in the case when only the relative outputs are available for the followers and can also be extended to general linear multi-agent systems under some additional conditions.
\item The proposed method shows how to realize different types of formation shape by only tuning a formation shape parameter. The closed-loop tracking errors of the formation maneuver control semi-globally and  asymptotically converge to zero.
\end{enumerate}

The remaining parts of the paper are organized as follows. Preliminaries and problem statement are given in Section \ref{preli}. In Section \ref{leader}, the output feedback formation maneuver control algorithms of the leaders and followers are presented. In Section \ref{discu},
the proposed method is compared with existing formation maneuver control of multi-agent systems with high-order integrator dynamics.  Section \ref{simu} presents
a numerical example to illustrate the theoretical findings and conclusions are drawn in 
Section \ref{conc}.

\section{Preliminaries and Problem Statement}\label{preli}

Let $\| \cdot \|_1$ and $\| \cdot \|_2$ be the $1$-norm and $2$-norm of a given matrix or vector, respectively.
${I}_d $ represents the $d \times d$ identity matrix. Let  ${\mathbf{1}}_d$ be the all-one vector in $\mathbb{R}^d$. Let ${\mathbf{0}}$ be the all-zero matrix or vector.
For a square matrix $\mathcal{A}$, $\mathcal{A}\!>\!0 \ ( \mathcal{A}\! < \!0)$ means that the matrix $\mathcal{A}$ is positive definite (negative definite).
$\lambda_{\max}(\mathcal{A})$ and $\lambda_{\min}(\mathcal{A})$  represent the  maximum and minimum eigenvalue of $\mathcal{A}$, respectively. 
Let $\otimes$ be the Kronecker product.  Denote $Q \in \text{SO}(d)$ as the set of $d$-dimensional rotation matrices. A matrix is Hurwitz if all the eigenvalues of the matrix have negative real parts.
Consider a formation of $n$ mobile agents in 3-D space whose positions are denoted by $p_i \in \mathbb{R}^3, i =1, \cdots, n$. A graph is denoted by $\mathcal{G}=\{ \mathcal{V},\mathcal{E}\}$ consisting of a non-empty  node set $\mathcal{V}=\{1,  \cdots, n \}$ and an edge set $\mathcal{E}  \subseteq \mathcal{V} \times \mathcal{V}$.  The neighbor set of agent $i$ is denoted by
$\mathcal{N}_i \triangleq \{ j \in \mathcal{V} : (i,j) \in \mathcal{E} \}$. A formation is represented by $(\mathcal{G}, p)$, where $p=[p_1^{\top},\cdots, p_n^{\top}]^{\top}$ is a configuration of the formation. In this article, the agents who
have access to the maneuver parameters are called leaders, while the agents who have no access to the maneuver parameters are called followers.
Consider a nonlinear system \cite{paden1987calculus}:
\begin{equation}
\dot{x}=f(x,t),    
\end{equation}
where $f(\cdot ):\mathbb{R}^{d}\times\mathbb{R}\rightarrow\mathbb{R}^{d}$ is a Lebesgue measurable and locally bounded discontinuous function. $x(\cdot)$ is called a Filippov solution of $f(x,t)$ on $[t_{0},t_{1}]$ if $x(\cdot)$ is absolutely continuous over $[t_{0},t_{1}]$ and satisfies the differential inclusion $x\in \mathcal{K}[f](x,t)$ for almost all $t\in \lbrack t_{0},t_{1}]$ with $\mathcal{K}[f]:=\cap _{\delta >0}\cap _{\mu (\bar{N})=0}\bar{co}(f(B(x,\delta ) \!-\! \bar{N}),t)$. $B(x,\delta )$  is the open ball of radius $\delta$  centered at $x$.  $\bar{co}({N})$ is the convex closure of set $N$. $\cap _{\mu (\bar{N}=0)}$ is the intersection over all sets $\bar{N}$ of Lebesgue measure zero. The convex hull is denoted by
$co(\cdot)$. The Clarke's generalized gradient of a locally Lipschitz continuous function $V$ is given by
$\partial V \! \triangleq \! co\{\lim \triangledown V(x_{i})|x_{i}\rightarrow x,x_{i}\in \Omega _{v}\cup \bar{N}\}$. $\Omega _{v}$ is the Lebesgue measure zero set where $\triangledown V(x_{i})$ does not exist.  $\bar{N}$ is a zero measure set \cite{paden1987calculus}. Then, the set-valued Lie derivative of $V$ with respect to $\dot{x}=f(x,t)$ is given by $\dot{\tilde{V}}:=\cap _{\varphi \in\partial V}\varphi ^{\top}\mathcal{K}[f](x,t)$.

\subsection{Leader-follower Formation}\label{cons}

Without loss of generality,
for a group of $n=n_l+n_f$ agents consisting of $n_l$ leaders and $n_f$ followers in 3-D space, denote $\mathcal{V}_l=\{1, \cdots, n_l\}$ as the leader set, and $\mathcal{V}_f=\{n_l\!+\!1, \cdots, n\}$ as the follower set. Let $r_l=[r_1^{\top}, \cdots, r_{n_l}^{\top}]^{\top}$ and $r_f=[r_{n_l\!+\!1}^{\top}, \cdots, r_n^{\top}]^{\top}$, where $r_i \in \mathbb{R}^3 , i \in \{1, \cdots, n \}$ is the nominal position of agent $i$.
A
nominal formation is denoted by $(\mathcal{G}, r)$, where $r=[r_l^{\top}, r_f^{\top}]^{\top}$ is the nominal configuration. Let $e_{ij} \! \triangleq \! r_i -  r_j$. A \textit{displacement parameter set} of agent $i$ is a set of scalars $\{ w_{ij} \}_{(i,j) \in \mathcal{E}}$ satisfying
\begin{equation}\label{root}
     \sum\limits_{j \in \mathcal{N}_i} w_{ij}e_{ij}= \mathbf{0}, \ \ \sum\limits_{j \in \mathcal{N}_i} \|w_{ij}\|_2 >0.
\end{equation}

The equation \eqref{root} is called a \textit{displacement constraint}, which can be rewritten as
\begin{equation}
    \sum\limits_{j \in \mathcal{N}_i} w_{ij}r_i- \sum\limits_{j \in \mathcal{N}_i}w_{ij}r_j\!=\! \mathbf{0}.
\end{equation}

The minimum number of agents required to construct a displacement constraint is given below. 

(\romannumeral1) If agent $i$ and any four neighboring agents
$j, k, h, l$ are not coplanar,  there is a non-zero vector $ w_i=(w_{ij},  w_{ik}, w_{ih},  w_{il})^{\top} \in \mathbb{R}^{4}$ such that 
\begin{equation}\label{en1}
 w_{ij}e_{ij}+ w_{ik}e_{ik}+w_{ih}e_{ih}+w_{il}e_{il}= \mathbf{0}. 
\end{equation}

(\romannumeral2) If agent $i$ and any three neighboring agents $j,k,h$ are coplanar, there is a non-zero vector $ \bar w_i=(w_{ij},  w_{ik}, w_{ih})^{\top} \in \mathbb{R}^{3}$ such that
\begin{equation}\label{en2}
 w_{ij}e_{ij}+ w_{ik}e_{ik}+w_{ih}e_{ih}= \mathbf{0}. 
\end{equation}

(\romannumeral3) If agent $i$ and its any two neighboring agents $j,k$ are colinear, there is a non-zero vector $\widehat w_i=(w_{ij},  w_{ik})^{\top} \in \mathbb{R}^{2}$ such that
\begin{equation}\label{en3}
 w_{ij}e_{ij}+ w_{ik}e_{ik}= \mathbf{0}. 
\end{equation}

The parameters $w_{ij},  w_{ik}, w_{ih},  w_{il}$ in \eqref{en1} are designed by solving the following matrix equation
\begin{equation}\label{element}
\left[ \!
\begin{array}{c c c c}
e_{ij} & e_{ik} & e_{ih}  &  e_{il} \\
\end{array}
\right]  \left[ \!
	\begin{array}{c}
	w_{ij} \\
	w_{ik} \\
	w_{ih} \\
	w_{il}
	\end{array}
	\right] = \mathbf{0}.
\end{equation}

Similarly, the parameters in \eqref{en2} and \eqref{en3} can also be designed. Each follower $i \! \in \! \mathcal{V}_f$ can form a displacement constraint \eqref{root}. A set of $n_f$ displacement constraints can be rewritten in a compact form shown as
\begin{equation}\label{form}
(\Omega_f \otimes I_3)r=\mathbf{0},  
\end{equation}
where $\Omega_f \in \mathbb{R}^{n_f \times n}$ is called the follower matrix satisfying
\begin{equation}\label{zer}
\begin{array}{ll}
    &[\Omega_f]_{ij} \!=\!  \left\{ \! \begin{array}{lll} 
    -w_{ij}, & 
    j \in \mathcal{N}_i, \ \ i \neq j, \\
    \ \  0, &  j \notin \mathcal{N}_i, \ \ i \neq j, \\
    \sum\limits_{k \in \mathcal{N}_i}w_{ik},
    & i=j. \\
    \end{array}\right. 
\end{array} 
\end{equation}

\begin{remark}\label{rre1}
The follower matrix can be non-square and non-symmetric and is different from the stress matrix defined in \cite{zhao2018affine}, which is a square and symmetric matrix, i.e., $w_{ij}=w_{ji}$. Moreover, the edge weights of 
a traditional Laplacian matrix are generally nonnegative except for some special cases; see, e.g., \cite{altafini2012consensus}.
Note also that the Laplacian matrix in \cite{altafini2012consensus} is square and symmetric. 
\end{remark}

Since the agents
are divided into leaders and followers, the follower matrix $\Omega_f$ can be partitioned as
\begin{align}\label{aef}
 \Omega_f=[\begin{array}{ll}
    \Omega_{fl}  &  \Omega_{f\!f} 
    \end{array}],
\end{align}
where $ \Omega_{fl} \in \mathbb{R}^{{n_f} \times {n_l}}$, $\Omega_{f\!f} \in \mathbb{R}^{{n_f} \times {n_f}}$. Then, \eqref{form} becomes
\begin{equation}\label{1aef}
   (\Omega_{fl} \otimes I_3)r_l +    (\Omega_{f\!f} \otimes I_3)r_f = \mathbf{0}.
\end{equation}

\begin{defn}
\cite{zhao2018affine} A nominal formation $(\mathcal{G}, r)$ is called localizable if  $r_f$ can be determined by $r_l$, i.e.,
the matrix $\Omega_{f\!f}$ in \eqref{1aef} is nonsingular. 
\end{defn}

\begin{figure}[t]
\centering
\includegraphics[width=0.9\linewidth]{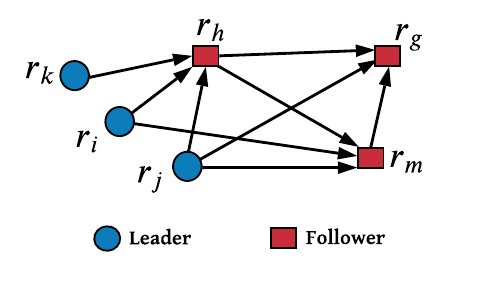}
\caption{Nominal formation. }
\label{lo}
\end{figure}

If $(\mathcal{G}, r)$ is localizable, it is concluded from \eqref{1aef} that
\begin{equation}
r_f = -(\Omega_{f\!f}^{-1}\Omega_{fl}\otimes I_3) r_l.
\end{equation}

An example of how to 
calculate the matrices $\Omega_{fl}$ and $\Omega_{f\!f}$ is shown in Fig. \ref{lo}, where $r_i=(2,0,0)^{\top}$, $r_j=(3,0,0)^{\top}$, $r_k=(1,2,0)^{\top}$, $r_h=(3,3,0)^{\top}$, $r_m=(6,1,0)^{\top}$, $r_g=(7,3,0)^{\top}$.

\begin{enumerate}[(i)]
    \item The displacement constraint among the agents $r_i, r_j, r_k ,r_h$ is
    \begin{equation}\label{o1}
        e_{hi}-\frac{5}{6}e_{hj}- \frac{1}{2}e_{hk}= \mathbf{0}.
    \end{equation}
    \item The displacement constraint among the agents $r_i, r_j, r_h ,r_m$ is
    \begin{equation}
        e_{mi}-\frac{11}{9}e_{mj}-\frac{1}{9}e_{mh}= \mathbf{0}.
    \end{equation}
     \item The displacement constraint among the agents $r_j, r_h, r_m ,r_g$ is
    \begin{equation}\label{o3}
        e_{gj}-\frac{5}{8}e_{gh}-\frac{3}{2}e_{gm}= \mathbf{0}.
    \end{equation}
\end{enumerate}

Let $r_f=(r_h^{\top}, r_m^{\top}, r_g^{\top})^{\top}$ and $r_l=(r_i^{\top}, r_j^{\top}, r_k^{\top})^{\top}$. From \eqref{o1}-\eqref{o3}, it has $ (\Omega_{fl} \otimes I_3)r_l +    (\Omega_{f\!f} \otimes I_3)r_f = \mathbf{0}$, where
\begin{equation}\label{o4}
\Omega_{fl} \!=\! \left[\begin{array}{lll}
   -1 & \ \frac{5}{6} &    \frac{1}{2}   \\
   -1 & \  \frac{11}{9} & 0 \\
   \ \  0 & -1 &   0
    \end{array}\right],  
     \Omega_{f\!f} = \left[\begin{array}{lll}
    -\frac{1}{3} & \ \ 0 & \ \ 0 \\
    \ \ \frac{1}{9} &  -\frac{1}{3} & \ \ 0 \\
    \ \ \frac{5}{8} & \ \ \frac{3}{2} & -\frac{9}{8}
    \end{array}\right].  
\end{equation}

Since $\Omega_{f\!f}$ in \eqref{o4} is nonsingular,
the nominal formation shown in Fig. \ref{lo} is localizable.
Next, the graph condition of a localizable nominal formation is given. The angle-displacement infinitesimal motions are the motions to preserve the invariance of the angle-displacement rigidity matrix, where the proposed follower matrix is a submatrix of the angle-displacement rigidity matrix. The trivial angle-displacement infinitesimal motions include scalings, rotations, and translations of the nominal formation $(\mathcal{G}, r)$.

\begin{defn}\label{infig}
\cite{fang2020} 
A nominal formation $(\mathcal{G}, r)$ is infinitesimally angle-displacement rigid if
all the angle-displacement infinitesimal motions are trivial. 
\end{defn}

Any two nodes are called collocated if the nodes are located at the same position.

\begin{assumption}\label{as1}
{No two nodes in the nominal formation $(\mathcal{G}, r)$ are collocated, and
the nominal formation is infinitesimally angle-displacement rigid}.
\end{assumption}

\begin{remark}\label{ge1}
It is concluded from\cite{fang2020} that $(\mathcal{G}, r)$ is localizable if it is infinitesimally angle-displacement rigid, where the norminal configuration $r$ can be either non-convex or non-generic. From Lemma $5$ and Lemma $6$ in \cite{fang2020},  there must be at least four leaders in $\mathbb{R}^3$ and three leaders in $\mathbb{R}^2$ if $(\mathcal{G}, r)$ is localizable, but there is no constraint on the number of the followers. 
\end{remark}

\begin{remark}
Note that the localizability in terms of the stress matrix is guaranteed by a generic nominal configuration $(\mathcal{G}, r)$ \cite{zhao2018affine}. The advantage of the proposed follower matrix is that
the localizability in terms of the proposed follower matrix $\Omega_f$ can be guaranteed by both generic and non-generic nominal configurations. 
\end{remark}

The positions of the agents in the leader group and follower group are represented by
$p_l \!=\! [p_1^{\top}, \cdots, p_{n_l}^{\top}]^{\top} \! \in \! \mathbb{R}^{3n_l}$ and $p_f \!=\! [p_{n_l\!+\!1}^{\top} , \cdots, p_{n}^{\top}]^{\top} \! \in \! \mathbb{R}^{3n_f}$, respectively. 
The desired time-varying formation $p^*(t)=[{p^*_l}^{\top}\!(t),{p^*_f}^{\top}\!(t)]^{\top}$
is designed as
\begin{equation}\label{ti}
    p^*(t)= a(t)[I_n \otimes Q(t)]g(t)+ {\mathbf{1}}_n \otimes b(t),
\end{equation}
where $a(t) \in \mathbb{R}$, $Q(t) \in SO(3)$, and $b(t) \in \mathbb{R}^3$ represent, respectively, the time-varying scaling, rotational, and translational formation maneuvers. $g(t)= [g_l^{\top}(t), g_f^{\top}(t)]^{\top}$ represents the time-varying formation shape designed by
\begin{equation}\label{vcon1}
(\Omega_{fl}\otimes I_3) g_l(t)+(\Omega_{f\!f} \otimes I_3)g_f(t)  = \mathbf{0},
\end{equation}
where the elements of $\Omega_{fl}$ and $\Omega_{f\!f}$ in \eqref{vcon1} are calculated by the nominal formation $(\mathcal{G}, r)$ shown in \eqref{element}.

\begin{remark}
If it is not required to change the formation shape, $g(t)\!=\!r$, where $r$ is the nominal configuration. Then, the desired time-varying formation in \eqref{ti} becomes
\begin{equation}\label{ttra}
    p^*(t)= a(t)[I_n \otimes Q(t)]r+ {\mathbf{1}}_n \otimes b(t).    
\end{equation}
\end{remark}

\begin{lemma} \label{ll21}
Suppose Assumption \ref{as1} holds \cite{fang2020tc}. For the desired formation $p^*(t)=[{p^*_l}^{\top}\!(t),{p^*_f}^{\top}\!(t)]^{\top}$ in \eqref{ti},  $(\Omega_{f\!f} \otimes I_3 )p^*_f(t) \!+\! (\Omega_{fl} \otimes I_3) p^*_l(t) \!=\! \mathbf{0}$, i.e., $p_f^*(t)$ can be calculated as
\begin{equation}\label{re}
p_f^*(t) = -(\Omega_{f\!f}^{-1}\Omega_{fl}\otimes I_3) p_l^*(t).
\end{equation}
\end{lemma}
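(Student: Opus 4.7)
The plan is to verify the constraint $(\Omega_{f\!f} \otimes I_3)p_f^*(t) + (\Omega_{fl} \otimes I_3)p_l^*(t) = \mathbf{0}$ directly from the definition of $p^*(t)$ in \eqref{ti}, by splitting it into three parts (scaling/rotation of the shape, and translation) and checking each part separately. Once the compact identity $(\Omega_f \otimes I_3)p^*(t) = \mathbf{0}$ is established, the explicit formula \eqref{re} will follow by invoking Assumption \ref{as1} together with Remark \ref{ge1}, which guarantees that $\Omega_{f\!f}$ is nonsingular.

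First I would handle the translational term $\mathbf{1}_n \otimes b(t)$. Using the mixed-product property of the Kronecker product,
\begin{equation*}
(\Omega_f \otimes I_3)(\mathbf{1}_n \otimes b(t)) = (\Omega_f \mathbf{1}_n) \otimes b(t).
\end{equation*}
The key observation is that $\Omega_f \mathbf{1}_n = \mathbf{0}$: by construction \eqref{zer}, each row of $\Omega_f$ has diagonal entry $\sum_{k \in \mathcal{N}_i} w_{ik}$ and off-diagonal entries $-w_{ij}$ for $j \in \mathcal{N}_i$, so the row sums vanish. Therefore the translational term contributes zero, which is nothing more than the fact that a common offset preserves every displacement constraint of the form \eqref{root}.

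Next I would tackle the scaling-rotation term $a(t)[I_n \otimes Q(t)]g(t)$. Again using the mixed-product rule,
\begin{equation*}
(\Omega_f \otimes I_3)[I_n \otimes Q(t)] = \Omega_f \otimes Q(t) = [I_{n_f} \otimes Q(t)](\Omega_f \otimes I_3),
\end{equation*}
so that
\begin{equation*}
(\Omega_f \otimes I_3)\bigl\{a(t)[I_n \otimes Q(t)]g(t)\bigr\} = a(t)[I_{n_f} \otimes Q(t)](\Omega_f \otimes I_3)g(t).
\end{equation*}
By the shape-design constraint \eqref{vcon1}, $(\Omega_f \otimes I_3)g(t) = \mathbf{0}$, so this term also vanishes. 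Summing the two contributions yields $(\Omega_f \otimes I_3)p^*(t) = \mathbf{0}$, i.e., $(\Omega_{fl}\otimes I_3)p_l^*(t) + (\Omega_{f\!f}\otimes I_3)p_f^*(t) = \mathbf{0}$.

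The only remaining step is to solve for $p_f^*(t)$, which requires invertibility of $\Omega_{f\!f}$. Under Assumption \ref{as1} the nominal formation $(\mathcal{G}, r)$ is infinitesimally angle-displacement rigid, hence localizable by Remark \ref{ge1}, and so $\Omega_{f\!f}$ is nonsingular. Left-multiplying by $-(\Omega_{f\!f}^{-1} \otimes I_3)$ gives \eqref{re}. I do not expect any real obstacle here; the whole argument is an exercise in Kronecker-product algebra, and the only substantive ingredient, non-singularity of $\Omega_{f\!f}$, has already been provided by the preceding definitions and remarks.
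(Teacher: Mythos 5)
Your proposal is correct and follows essentially the same route as the paper's own proof: both arguments rest on the zero row sums of $\Omega_f$ (killing the translation term $\mathbf{1}_n \otimes b(t)$), the Kronecker mixed-product identity $\Omega_f \otimes Q(t) = [I_{n_f}\otimes Q(t)](\Omega_f\otimes I_3)$ combined with the shape constraint \eqref{vcon1} (killing the scaled-rotated term), and the nonsingularity of $\Omega_{f\!f}$ to extract \eqref{re}. The only difference is cosmetic: you make the appeal to Remark \ref{ge1} for invertibility explicit, which the paper leaves implicit.
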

\begin{proof}

From \eqref{zer}, it has
\begin{equation}\label{tt2}
    \Omega_{fl} \cdot \mathbf{1}_{n_l} + \Omega_{f\!f} \cdot \mathbf{1}_{n_f} = \mathbf{0}.
\end{equation}

From \eqref{vcon1}, it has
\begin{equation}\label{tt1}
  \begin{array}{ll}
 & [\Omega_{fl} \otimes Q(t)]g_l(t) \!+\! [\Omega_{f\!f} \otimes Q(t)]g_f(t) \\
 &\! =\! [I_{n_f} \! \otimes \! Q(t)] \cdot [ (\Omega_{fl}\otimes I_3) g_l(t)\!+\!(\Omega_{f\!f} \otimes I_3)g_f(t) ] \\
 & \!= \!\mathbf{0}.
 \end{array}
\end{equation}

Combining  \eqref{ti}, \eqref{tt2}, and \eqref{tt1}, it yields that
\begin{equation}\label{ty1}
    \begin{array}{ll}
         & (\Omega_{f\!f} \otimes I_3 )p^*_f(t) + (\Omega_{fl} \otimes I_3) p^*_l(t) \\
         & = a(t) [(\Omega_{fl} \otimes Q(t))g_l(t) \!+\! (\Omega_{f\!f} \otimes Q(t))g_f(t)] \\
         & \ \ \ + [\Omega_{fl} \cdot \mathbf{1}_{n_l} + \Omega_{f\!f} \cdot \mathbf{1}_{n_f} ]\otimes b(t) \\
         &  = \mathbf{0}.
    \end{array}
\end{equation}

Then, the conclusion follows.
\end{proof}

\subsection{Problem Statement}

The $m$th order multi-agent system is described by 
\begin{equation}\label{sysl}
       p_i^{(m)} = u_i,   \  \ m \ge 1, \ i =1, \cdots, n, 
\end{equation}
where $p_i, u_i \in \mathbb{R}^3$ are, respectively, the position and control input of agent $i$ in 3-D space. $m$ is a positive integer, and $p_i^{(m)}$ is the $m$th order derivative of $p_i$. The high-order multi-agent system \eqref{sysl} can be rewritten as 
\begin{equation}\label{sysl1}
\dot x_i = Ax_i+Bu_i, \ \   i = 1, \cdots, n,  
\end{equation}
where $x_i=(p_i^{\top}, [p_i^{(1)}]^{\top}, \cdots, [p_i^{(m-1)}]^{\top})^{\top} \in   \mathbb{R}^{3m  }$ is the motion state of agent $i$, and 
\begin{equation}\label{ab1}
    \begin{array}{ll}
         &   A =  \left[\begin{array}{llll}
    0 & 1 & \cdots & 0 \\
    \vdots  & \vdots & \ddots & \vdots \\
     0 & 0 & \cdots & 1 \\
     0 & 0 & \cdots & 0
    \end{array}\right] \otimes I_3 \in \mathbb{R}^{3m \times 3m}, \\
    \\
    & B = \left[\begin{array}{l}
    0  \\
    \vdots \\
    0 \\
    1
    \end{array}\right] \otimes I_3 \in \mathbb{R}^{3m \times 3}.
    \end{array}
\end{equation}

The measured output $y_i \in \mathbb{R}^q$ of each agent $i$ is given as
\begin{equation}\label{61}
   y_i = Cx_i, \ i =1, \cdots, n, 
\end{equation}
where $C \in \mathbb{R}^{q \times 3m}$. The desired motion state of agent $i$ is denoted by $ x_i^* = [{p_i^*}^{\top}, [{p_i^*}^{(1)}]^{\top}, \cdots, [{p_i^*}^{(m-1)}]^{\top}]^{\top}$.

\begin{remark}
The proposed method can be extended to multi-agent systems with general linear dynamics under some additional conditions, i.e., bounded target positions of the leaders and full row rank of the matrix $B$. 
\end{remark}

\begin{defn}\label{defi}
The multi-agent system \eqref{sysl} is said to achieve the semi-global formation maneuver control if, for any priori given bounded set $\mathcal{X} \in \mathbb{R}^{3m}$,
\begin{equation}\label{zone}
   \lim\limits_{t \rightarrow \infty} (x_i(t) -  x_i^*(t)) = \mathbf{0}, \ \ i \in \mathcal{V}_l \cup \mathcal{V}_f,
\end{equation}
where $x_i(0) \in \mathcal{X}, \ i \in \mathcal{V}_l \cup \mathcal{V}_f$.
\end{defn}

The semi-global stability for multi-agent systems focuses on the feasible solution of closed-loop system based on the given priori bounded set \cite{wang2019semi}. 
Denote $x_l \!=\! [x_1^{\top}, \cdots, x_{n_l}^{\top}]^{\top} $ and $x_f \!=\! [x_{n_l\!+\!1}^{\top} , \cdots, x_{n}^{\top}]^{\top}$ as the motion states of the leaders and followers, respectively. Let $ x_l^* = [[{ x}_1^*]^{\top},  \cdots, [{x}_{n_l}^*]^{\top}]^{\top}$ and $ x_f^* = [[ {x}_{n_l\!+\!1}^*]^{\top},  \cdots, [ {x}_{n}^*]^{\top}]^{\top}$ be the desired motion states of leaders and followers, respectively. From Lemma \ref{ll21},  $x_f^*(t)=-(\Omega_{f\!f}^{-1}\Omega_{fl} \otimes I_{3m})x^*_l(t))$. Hence, under 
Assumption \ref{as1}, 
Definition \ref{defi} is equivalent to the following Definition \ref{defi1}.

\begin{defn}\label{defi1}
The multi-agent system \eqref{sysl} is said to achieve the semi-global formation maneuver control if, for any priori given bounded set $\mathcal{X}_l \in \mathbb{R}^{3mn_l}$,
\begin{align}\label{obj1}
& \lim\limits_{t \rightarrow \infty}(x_l(t) - x_l^*(t))= \mathbf{0}, \\ & \label{obj2}
 \lim\limits_{t \rightarrow \infty}(x_f(t) + (\Omega_{f\!f}^{-1}\Omega_{fl} \otimes I_{3m})x_l(t))=\mathbf{0}, 
\end{align}
where $x_l(0)= [x_1^{\top}(0), \cdots, x_{n_l}^{\top}(0)]^{\top}  \in \mathcal{X}_l$.
\end{defn}

It is concluded from the Lyapunov stability theorem that the stability of the system shown in Section \ref{sefol} can be guaranteed if $\Omega_{f\!f}$ is positive definite. If $\Omega_{f\!f}$ is not positive definite for the nominal positions $r$ and communication graph $\mathcal{G}$, there are two methods to make the matrix $\Omega_{f\!f}$ positive definite. The first method is to change the nominal positions $r$, and the second method is to change the 
communication graph $\mathcal{G}$, i.e., designing a new communication graph $\mathcal{\bar G}$ for the agents. Next, two examples of the second method are presented.
Under Assumption \ref{as1}, i.e., $\Omega_{f\!f}$ is nonsingular, the following follower
matrices $\bar \Omega_f$, $\widehat \Omega_f$ can be used to design controllers of the followers because the matrices $\bar \Omega_{f\!f}= I_{n_f}$ in \eqref{zero1} and  $\widehat \Omega_{f\!f} =  \Omega_{f\!f}^{\top} \Omega_{f\!f}$ in \eqref{zero2} are positive definite.
\begin{align}\label{zero1}
& \bar \Omega_{f}\! = \!
    \Omega_{f\!f}^{-1} \Omega_f \!= \! \left[\begin{array}{cc}
    \Omega_{f\!f}^{-1}\Omega_{fl}  &  I_{n_f} \end{array}\right]. \\ & \label{zero2}
 \widehat{\Omega}_{f}  \!=\!  \Omega_{f\!f}^{\top} \Omega_f \!=\! \left[ \! \begin{array}{cc}
\Omega_{f\!f}^{\top}\Omega_{fl}  &   \Omega_{f\!f}^{\top} \Omega_{f\!f}
    \end{array} \! \right].  
\end{align}

If $\Omega_{f\!f}$ is not positive definite,
it is clear that the communication graph of $\bar \Omega_{f}$ in  \eqref{zero1} or $\widehat{\Omega}_{f}$ in \eqref{zero2} is different from that of $\Omega_{f}$.
The elements of $\bar \Omega_{f}$ and $\widehat{\Omega}_{f}$ also satisfy
\begin{align}\label{fo1}
& \begin{array}{ll}
    &[\bar \Omega_f]_{ij} \!=\!  \left\{ \! \begin{array}{lll} 
    -\bar w_{ij}, & 
    j \in \mathcal{N}_i, \ \ i \neq j, \\
    \ \  0, &  j \notin \mathcal{N}_i, \ \ i \neq j, \\
    \sum\limits_{k \in \mathcal{N}_i}\bar w_{ik},
    & i=j,\\
    \end{array}\right. 
\end{array} \\
& \begin{array}{ll}\label{fo2}
    &[\widehat \Omega_f]_{ij} \!=\!  \left\{ \! \begin{array}{lll} 
    -\widehat w_{ij}, & 
     j \in \mathcal{N}_i, \ \ i \neq j, \\
    \ \  0, &  j \notin \mathcal{N}_i, \ \ i \neq j, \\\sum\limits_{k \in \mathcal{N}_i}\widehat w_{ik},
    & i=j.\\
    \end{array}\right. 
\end{array}
\end{align}

\section{Controllers of the Leaders and Followers}\label{leader}

\subsection{Controllers of the Leaders}

In this paper, the
first leader decides
the time-varying maneuver parameters $a(t), Q(t), b(t), g(t)$ in \eqref{ti}, and the rest leaders calculate their desired positions by  \eqref{ltra} 
once they have obtained 
maneuver parameters $a(t), Q(t), b(t), g(t)$ from the first leader through communication.
\begin{equation}\label{ltra}
      p^*_i(t)= a(t)Q(t)g_i(t)+ b(t), \ \ i \in \{2, \cdots, n_l\},
\end{equation}
where $g_i(t)$ is $i$th element of $g_l(t)$ shown in \eqref{vcon1}.

\begin{assumption}\label{as4} 
The leader $p_i \! \in \! \{2, \cdots, n_l\}$ can obtain information from the first leader $p_1$.
The desired trajectory of each leader is $k$th order differentiable, and its $k$th order derivative ${p^*_i}^{(k)}(t), i \! \in \! \{1, \cdots, n_l\}$
is bounded, i.e., there exists $\gamma_k \in  \mathbb{R}$ satisfying $\|{p^*_i}^{(k)}(t)\|_{2} \! \le \! \gamma_k$, $(k\!=\!1,\cdots, m)$\cite{chen2020distributed}.
\end{assumption}

The controller of leader $i$ is designed as
\begin{equation}\label{vi2}
\begin{array}{ll}
     &  \dot \eta_i = A\eta_i+Bu_i+L(C\eta_i-y_i), \\
     & u_i\!=\! ( \beta \otimes I_3)^{\top} (\eta_i-  x_i^*
)\!+\! {p^*_i}^{(m)}, \ \ i \in \mathcal{V}_l,
\end{array}
\end{equation}
where $\eta_i \in \mathbb{R}^{3m}$ is the motion state estimate of leader $i$.
$\beta = [\beta_0, \beta_1, \cdots, \beta_{m\!-\!1}]^{\top} \in \mathbb{R}^{m} $ is the control gain. $L \in \mathbb{R}^{3m \times q}$
is the output feedback gain matrix. 
The tracking error of leader $i$ is denoted by $e_i = x_i- x_i^*$. The motion state estimation error is denoted by $e_{\eta_i}=\eta_i-x_i$.
Combining \eqref{61} and \eqref{vi2}, it has 
\begin{equation}\label{clo2}
    \left[\begin{array}{l}
    \dot e_i  \\
    \dot e_{\eta_i} 
    \end{array}\right]=  \left[\begin{array}{ll}
    W_1\otimes I_3 & W_2\otimes I_3  \\
    \ \ \ \ \mathbf{0} & A+LC
    \end{array}\right]\left[\begin{array}{l}
     e_i  \\
    e_{\eta_i} 
    \end{array}\right],
\end{equation}
where
\begin{equation}\label{weig}
 W_1 =  \left[\begin{array}{llll}
    0 & 1 &  & 0  \\
    \vdots  &   & \ddots & \\
    0 & 0 & & 1 \\
    \beta_0 & \beta_1 & \cdots & \beta_{m\!-\!1}
    \end{array}\right], 
\end{equation}

and

\begin{equation}
W_2 =  \left[\begin{array}{llll}
    0 & 0 &  & 0  \\
    \vdots  &   & \ddots & \\
    0 & 0 & & 0 \\
    \beta_0 & \beta_1 & \cdots & \beta_{m\!-\!1}
    \end{array}\right].
\end{equation}

For the matrix $W_1$, its eigenvalues $\lambda_1, \lambda_2, \cdots, \lambda_{m}$ satisfy
\begin{equation}\label{asv}
\prod_{k=1}^{m}(\lambda-\lambda_k)= -\beta_0-\beta_1\lambda-\cdots-  \beta_{m\!-\!1} \lambda^{m\!-\!1}+\lambda^{m}.
\end{equation}

Hence, the eigenvalues of $W_1$ can be freely assigned by designing parameters $\beta_0, \beta_1, \cdots, \beta_{m}$ through \eqref{asv}.

\begin{theorem}\label{sth1}
Suppose Assumption \ref{as4} holds. Under controller \eqref{vi2}, for each leader $i \in \mathcal{V}_l$, the controller $u_i$ is bounded and 
the tracking error $e_i$  exponentially converges to zero if
\begin{enumerate}[(i)]
\item The parameters $\beta_0, \beta_1, \cdots, \beta_{m-1}$ are chosen so that
the matrix $W_1$ is Hurwitz;
\item $L= - H^{-1}C^{\top}$, where $H>0$ is a positive definite matrix satisfying the following linear matrix inequality: 
    \begin{equation}\label{thu}
        A^{\top}H +HA-2C^{\top}C <0.
    \end{equation}
\end{enumerate}
\end{theorem}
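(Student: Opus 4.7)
The proof plan exploits the block upper-triangular structure of the closed-loop system \eqref{clo2}: the system matrix
\[
M = \begin{bmatrix} W_1\otimes I_3 & W_2\otimes I_3 \\ \mathbf{0} & A+LC \end{bmatrix}
\]
is Hurwitz if and only if both diagonal blocks $W_1\otimes I_3$ and $A+LC$ are Hurwitz. So the proof naturally splits into two parallel stability arguments, followed by a short post-processing step for the boundedness of $u_i$.

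First I would handle the $e_i$ block. From \eqref{asv}, $\det(\lambda I - W_1)=\lambda^{m}-\beta_{m-1}\lambda^{m-1}-\cdots-\beta_0$, so condition (i) places all eigenvalues of $W_1$ in the open left half-plane. Since the spectrum of a Kronecker product is the set of pairwise products of spectra and $I_3$ has eigenvalue $1$ only, $W_1\otimes I_3$ inherits the Hurwitz property from $W_1$.

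Next I would handle the observer-error block using the candidate Lyapunov function $V(e_{\eta_i}) = e_{\eta_i}^{\top}(H\otimes I_3) e_{\eta_i}$ (the factor $I_3$ comes from the block structure of $A,C$; equivalently, one may work directly with the $H$ supplied by (ii) on the $3m$-dimensional state, since $A$ and $C$ were built by Kronecker products with $I_3$). Substituting $L=-H^{-1}C^{\top}$ yields $HL=-C^{\top}$ and $L^{\top}H=-C$ because $H$ is symmetric, so
\[
(A+LC)^{\top}H+H(A+LC)=A^{\top}H+HA-2C^{\top}C,
\]
which is negative definite by \eqref{thu}. Standard Lyapunov theory then gives that $A+LC$ is Hurwitz and $e_{\eta_i}\to\mathbf{0}$ exponentially. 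Combining the two blocks, both $e_i$ and $e_{\eta_i}$ decay exponentially to zero.

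Finally I would verify boundedness of $u_i$. Rewriting $\eta_i - x_i^* = e_{\eta_i}+e_i$, the control law becomes $u_i=(\beta\otimes I_3)^{\top}(e_i+e_{\eta_i})+{p_i^*}^{(m)}(t)$. The first term decays exponentially (hence is bounded on $[0,\infty)$) and the second is bounded by Assumption~\ref{as4} with $\gamma_m$, so $u_i\in L^{\infty}$. The only subtle step, and therefore the place I would spend the most care, is the algebraic manipulation in the second block that turns the Lyapunov derivative into the left-hand side of \eqref{thu}; everything else is a routine application of linear-systems stability theorems to an upper-triangular, time-invariant closed loop.
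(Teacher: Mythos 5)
Your proof is correct, but it reaches the conclusion by a different and more elementary route than the paper. For stability, you invoke the spectral fact that a block upper-triangular matrix is Hurwitz iff its diagonal blocks are, which lets you treat $W_1\otimes I_3$ and $A+LC$ completely separately and ignore the coupling block $W_2\otimes I_3$ altogether; the paper instead builds an explicit block-diagonal Lyapunov function $V_1$ with weights $\mathcal{M}\otimes I_3$ and $\xi H$, absorbs the cross terms $\mathcal{M}W_2\otimes I_3$ by taking $\xi$ sufficiently large, and concludes $\bar{\mathcal{M}}<0$ via the Schur complement. Your route is cleaner for Theorem~\ref{sth1} in isolation, but the paper's heavier construction is not wasted: the same $V_1$ is reused verbatim in the combined Lyapunov function $V_3=V_1+V_2$ for the semi-global result, and the paper's explicit $\exp(Wt)=M\exp(Jt)M^{-1}$ machinery in the Appendix is what produces the concrete constant $\gamma_u$ that the follower gain condition $c_2\ge n_l\sigma\gamma_u$ in Theorem~\ref{theo1} depends on. Your boundedness argument ($u_i=(\beta\otimes I_3)^{\top}(e_i+e_{\eta_i})+{p_i^*}^{(m)}$ with both error terms decaying exponentially and the feedforward term bounded by Assumption~\ref{as4}) establishes $u_i\in L^\infty$, which is all Theorem~\ref{sth1} claims, but it does not yield the quantitative bound needed downstream. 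One minor notational point: $H$ from condition (ii) already acts on the full $3m$-dimensional state, so the Lyapunov function should be $e_{\eta_i}^{\top}He_{\eta_i}$ rather than $e_{\eta_i}^{\top}(H\otimes I_3)e_{\eta_i}$; you note the equivalent reading yourself, so this is cosmetic. The key algebraic step $(A+LC)^{\top}H+H(A+LC)=A^{\top}H+HA-2C^{\top}C$ under $L=-H^{-1}C^{\top}$ is exactly right.
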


\begin{proof}
Let
\begin{equation}\label{ww1}
    W= \left[\begin{array}{ll}
    W_1\otimes I_3 & W_2\otimes I_3  \\
    \ \ \ \ \mathbf{0} & A+LC
    \end{array}\right].
\end{equation}

From \eqref{clo2}, it has
\begin{equation}\label{ww2}
\left[\begin{array}{l}
     e_i(t)  \\
    e_{\eta_i}(t) 
    \end{array}\right] = \text{exp}(Wt)\left[\begin{array}{l}
     e_i(0)  \\
    e_{\eta_i}(0) 
    \end{array}\right], \ \  t \ge 0,
\end{equation}
where $\text{exp}(\cdot)$ is the natural exponential function. Since $W_1$ is Hurwitz, there exists a positive definite matrix $\mathcal{M}>0$ satisfying $W_1^{\top}\mathcal{M}+ \mathcal{M}W_1<0$. 
Define $\xi >0$ as a positive constant. For the controller \eqref{vi2},
consider a Lyapunov function \begin{equation}\label{leav}
    V_1 = \sum\limits_{i=1}^{n_l} \left[\begin{array}{ll}
     e_i^{\top} & e_{\eta_i}^{\top}  
    \end{array}\right] \left[\begin{array}{ll}
     \mathcal{M} \otimes I_3 &  \mathbf{0} \\
     \mathbf{0}  &  \xi H
    \end{array}\right] \left[\begin{array}{l}
     e_i  \\
    e_{\eta_i} 
    \end{array}\right].
\end{equation}

Taking the time-derivative of $V_1$ gives
\begin{equation}\label{leeam}
\begin{array}{ll}
     & \dot V_1 = \sum\limits_{i=1}^{n_l} \left[\begin{array}{ll}
     e_i^{\top} & e_{\eta_i}^{\top}  
    \end{array}\right] \mathcal{\bar M} \left[\begin{array}{l}
     e_i  \\
    e_{\eta_i} 
    \end{array}\right],
\end{array}
\end{equation}
where
\begin{equation}\label{leam}
\mathcal{\bar M} =    \left[\begin{array}{ll}
    (W_1^{\top}\mathcal{M}\!+\!\mathcal{M}W_1) \otimes I_3 &   \mathcal{M}W_2 \otimes I_3 \\
      W_2^{\top}\mathcal{M} \otimes I_3   &  \xi(A^{\top}H \!+\!HA\!-\!2C^{\top}C)
    \end{array}\right].
\end{equation}

Note that $A^{\top}H +HA-2C^{\top}C<0$ and $W_1^{\top}\mathcal{M}+ \mathcal{M}W_1<0$. By choosing
$\xi>0$ sufficiently large and using Schur Complement Lemma \cite{boyd1994linear}, it has $\mathcal{\bar M}<0$. Hence, the matrix $W$ is Hurwitz and $\lim\limits_{t \! \rightarrow \! \infty}  e_i(t),  e_{\eta_i}(t)  \rightarrow 0$. From \eqref{ww2},  it is clear that the tracking error of leader $i$ exponentially converges to zero.
In addition, the controller  $u_i$ is bounded, i.e.,
\begin{equation}\label{uup2}
    \| u_i(t) \|_{2} \le \gamma_u, \ \ t\ge 0, i \in \mathcal{V}_l,
\end{equation}
where $\gamma_u$ is the upper bound of the controllers of the leaders.
The details of how to obtain $\gamma_u$ is given in Appendix. 

\end{proof}

\subsection{Controllers of the Followers}\label{sefol}

Inspired by the control algorithms \cite{li2011distributed, cao2010distributed},
the distributed output feedback formation maneuver controller of follower $i \in \mathcal{V}_f$, based on the follower matrix 
$\bar \Omega_f$ in \eqref{zero1}, is designed as
\begin{equation}\label{follower}
\begin{array}{ll}
     & \dot \eta_i \!=\! A \eta_i \!+\! Bu_i \!+\! L(C\eta_i\!-\!y_i) \\
     & \ \ \ \ \  + c_1L\sum\limits_{j=1}^n\bar w_{ij} [(C(\eta_i\!-\!\eta_j) \!-\! (y_i\!-\!y_j)],  \\
     &  u_i \!=\! c_1K \sum\limits_{j=1}^n \bar w_{ij}(\eta_i\!-\!\eta_j)\!+\!c_2\text{sgn}[K \sum\limits_{j=1}^n \bar w_{ij}(\eta_i\!-\!\eta_j)], 
\end{array}    
\end{equation}
where $\eta_i \in \mathbb{R}^{3m}$ is the motion state estimate. The matrices
$A, B,$ and $C$ are given in \eqref{ab1} and \eqref{61}. $\bar w_{ij}  \in \mathbb{R}$ is the edge weight given in \eqref{fo1}. $L \in \mathbb{R}^{3m \times q}$ and $K \in \mathbb{R}^{3 \times 3m}$ are gain matrices. $c_1, c_2 >0 $ are positive constant coupling gains. $\text{sgn}(\cdot)$ is the signum function defined component-wise, i.e., for any vector $\rho = [\rho_1, \cdots, \rho_n]^T \in \mathbb{R}^n$, it has $\text{sgn}(\rho) = [\text{sign}(\rho_1), \cdots, \text{sign}(\rho_n)]^T$, where
$\text{sign}(\rho_i)= \frac{\rho_i}{\|\rho_i\|_2}$ if $\rho_i \neq 0$ and $0$ otherwise. Let $z_i=[x_i^{\top}, \eta_i^{\top}]^{\top}$,
$z_l = [z_1^{\top}, \cdots, z_{n_l}^{\top}]^{\top}$, and $z_f = [z_{n_l\!+\!1}^{\top}, \cdots, z_n^{\top}]^{\top}$. $ u_l=[u_1^{\top}, \cdots, u_{n_l}^{\top}]^{\top}$ is the control inputs of the leaders. Combining \eqref{sysl1}, \eqref{61}, and \eqref{follower}, the closed-loop multi-agent system becomes
\begin{equation}\label{linearf}
    \begin{array}{ll}
         & \dot z_l \!=\! (I_{n_l} \otimes G )z_l+ (I_{n_l} \otimes E )u_l, \\
         & \dot z_f \!= \! (I_{n_f} \otimes G \!+\! c_1I_{n_f} \otimes S )z_f \!+\! c_1(\Omega_{f\!f}^{-1}\Omega_{fl} \otimes S )z_l \\
         & \ \ \ \ \ \ +c_2 (I_{n_f} \otimes E ) D,
    \end{array}
\end{equation}
where
\begin{equation}
\begin{array}{ll}
     &  G = \left[\begin{array}{ll}
    A & \mathbf{0}\\
   -LC  &  A\!+\!LC
    \end{array}\right],  E \!=\! \left[\begin{array}{ll}
   B  \\
    B 
    \end{array}\right], \\
   & S= \left[\begin{array}{ll}
   \ \ \ \mathbf{0} & BK \\
   -LC & LC\!+\!BK
    \end{array}\right], \\
    \\
     &   D \!=\! \text{sgn}[(I_{n_f} \otimes [\mathbf{0} \ K])z_f \!+\! (\Omega_{f\!f}^{-1}\Omega_{fl} \otimes [\mathbf{0} \ K])z_l].
\end{array}    
\end{equation}

Define the tracking error of the followers as
\begin{equation}\label{leff}
    e_f = z_f + (\Omega_{f\!f}^{-1}\Omega_{fl} \otimes I_{6m})z_l.
\end{equation}

It has
\begin{equation}\label{tr1}
\begin{array}{ll}
     & \hspace*{-0.3cm} \dot e_f = \dot z_f + (\Omega_{f\!f}^{-1}\Omega_{fl} \otimes I_{6m}) \dot z_l \\
     & \hspace*{-0.2cm} \ \ \ = (I_{n_f} \otimes G \!+\! c_1I_{n_f} \otimes S)e_f \!+\!\! (\Omega_{f\!f}^{-1}\Omega_{fl} \otimes E)  u_l \\
     & \ \ \ \ + c_2 (I_{n_f} \otimes E ) {D}.
\end{array}
\end{equation}

The right side of \eqref{tr1} is discontinuous because the vector $D$ is discontinuous. Since the signum function $\text{sgn}(\cdot)$ is measurable and locally essentially bounded, the Filippov solution to \eqref{tr1} exists \cite{paden1987calculus}. Hence,
\eqref{tr1} can be rewritten as
\begin{equation}
\begin{array}{ll}
     & \hspace*{-0.3cm} \dot e_f \in^{a.e.}  \mathcal{K}[(I_{n_f} \otimes G \!+\! c_1I_{n_f} \otimes S)e_f \!+\!\! (\Omega_{f\!f}^{-1}\Omega_{fl} \otimes E)  u_l \\
     & \ \ \ \ \ \ \ \  + c_2 (I_{n_f} \otimes E ) {D}],
\end{array}    
\end{equation}
where $a.e.$ represents "almost everywhere".

\begin{theorem}\label{theo1}
Suppose Assumptions \ref{as1}-\ref{as4} hold. Under controller \eqref{follower},  the tracking error $e_f$ \eqref{leff} of the followers asymptotically converges to zero if
\begin{enumerate}

\item $c_1 \!>\! 1$, and $c_2 \! \ge \! n_l \sigma \gamma_u$, where $\sigma \!=\! \max\limits_{i,j}\| \sigma_{ij} \|_2$, $\sigma_{ij}$ is the $(i,j)$-th entry of $\Omega_{f\!f}^{-1} \Omega_{fl}$, and $\gamma_u$ is given in Appendix;
\item $K= - B^{\top}P^{-1}$, where $P> 0$ is a positive definite matrix satisfying the following linear matrix inequality:
    \begin{equation}\label{tee1}
        AP +PA^{\top}-2BB^{\top} <0.
    \end{equation}
\item $L= - H^{-1}C^{\top}$, where $H>0$ is a positive definite matrix satisfying the following linear matrix inequality: 
    \begin{equation}\label{tec}
        A^{\top}H +HA-2C^{\top}C <0.
    \end{equation}
\end{enumerate}
\end{theorem}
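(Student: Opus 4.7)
The plan is to decouple the closed-loop dynamics into an observer-error subsystem and a state-tracking subsystem, show the former is exponentially stable on its own, and then handle the latter via a Filippov Lyapunov analysis in which the signum term cancels the bounded non-vanishing perturbation coming from the leaders' control.

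\textbf{Observer-error subsystem.} For every follower define $\tilde{\eta}_i = \eta_i - x_i$. Subtracting $\dot{x}_i = A x_i + B u_i$ from $\dot{\eta}_i$ in \eqref{follower} and using $y_i = C x_i$ eliminates $B u_i$ and one copy of $C x_i$, leaving
\begin{equation}
\dot{\tilde{\eta}}_f = [I_{n_f} \otimes (A + (1+c_1)LC)]\tilde{\eta}_f + c_1(\Omega_{f\!f}^{-1}\Omega_{fl} \otimes LC)\tilde{\eta}_l .
\end{equation}
With $L = -H^{-1}C^{\top}$ one has $HLC = -C^{\top}C$, so \eqref{tec} is equivalent to $(A+LC)^{\top}H + H(A+LC) < 0$, and adding the extra $-2c_1 C^{\top}C$ for $c_1>0$ preserves negative definiteness; hence $A + (1+c_1)LC$ is Hurwitz with Lyapunov matrix $H$. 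Because Theorem \ref{sth1} already provides $\tilde{\eta}_l \to 0$ exponentially, a standard LTI argument then yields $\tilde{\eta}_f \to 0$ exponentially.

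\textbf{State-tracking subsystem.} Define $e_{x_f} := x_f + (\Omega_{f\!f}^{-1}\Omega_{fl} \otimes I_{3m})x_l$. Using $\bar{\Omega}_f = [\,\Omega_{f\!f}^{-1}\Omega_{fl},\, I_{n_f}\,]$, the stacked follower controller in \eqref{follower} can be rewritten as $u_f = c_1(I_{n_f} \otimes K)(e_{x_f} + \psi) + c_2\,\mathrm{sgn}[(I_{n_f} \otimes K)(e_{x_f} + \psi)]$, where $\psi := (\Omega_{f\!f}^{-1}\Omega_{fl} \otimes I_{3m})\tilde{\eta}_l + \tilde{\eta}_f$ is exponentially vanishing by the previous step. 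Substituting into $\dot e_{x_f} = (I\otimes A)e_{x_f} + (I\otimes B)u_f + (\Omega_{f\!f}^{-1}\Omega_{fl}\otimes B)u_l$ gives
\begin{equation}
\dot e_{x_f} = [I_{n_f} \otimes (A+c_1 BK)] e_{x_f} + c_1(I_{n_f}\otimes BK)\psi + (\Omega_{f\!f}^{-1}\Omega_{fl}\otimes B)u_l + c_2(I_{n_f}\otimes B)\,\mathrm{sgn}(\cdot).
\end{equation}
Take $V = e_{x_f}^{\top}(I_{n_f}\otimes P^{-1})e_{x_f}$. LMI \eqref{tee1} with $K = -B^{\top}P^{-1}$ makes the quadratic part of $\dot V$ negative definite whenever $c_1 \ge 1$. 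Using the identity $P^{-1}B = -K^{\top}$, the remaining terms in the set-valued Lie derivative simplify to $-2[(I_{n_f}\otimes K)e_{x_f}]^{\top}(\Omega_{f\!f}^{-1}\Omega_{fl}\otimes I_3)u_l - 2c_2\|(I_{n_f}\otimes K)e_{x_f}\|_1$ modulo a $\psi$-term. Applying blockwise Cauchy--Schwarz, the norm comparison $\|v\|_2 \le \|v\|_1$, $\max_{i,j}\|\sigma_{ij}\|_2 \le \sigma$, and the bound $\|u_i(t)\|_2 \le \gamma_u$ from \eqref{uup2}, the perturbation is majorized by $2 n_l \sigma \gamma_u \|(I_{n_f}\otimes K)e_{x_f}\|_1$, which is absorbed once $c_2 \ge n_l \sigma \gamma_u$. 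The net estimate $\dot V \le -\lambda_0\|e_{x_f}\|_2^2 + g(t)$ with $g(t)$ exponentially decaying, together with a Filippov--LaSalle/Barbalat argument, yields $e_{x_f} \to 0$; combined with $\tilde{\eta}_l,\tilde{\eta}_f \to 0$ this gives $e_f \to 0$.

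\textbf{Main obstacle.} The delicate step is executing this cancellation rigorously inside the Filippov/Clarke differential-inclusion framework of Section \ref{preli}: the signum term is only essentially defined, $V$ must be analyzed through its set-valued Lie derivative, and the exponentially vanishing cross term $c_1(I_{n_f}\otimes BK)\psi$ must be shown not to interfere with the $\ell^1$/$\ell^2$ comparison that underlies the sign cancellation. The semi-global qualifier enters because $\gamma_u$, and hence the required $c_2$, is determined by the leaders' transient from the prescribed bounded initial set $\mathcal{X}_l$ via Assumption \ref{as4}.
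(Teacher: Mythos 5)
Your proposal is correct in substance, and the key ingredients (the two LMIs, $K=-B^{\top}P^{-1}$, $L=-H^{-1}C^{\top}$, the bound $\|u_i\|_2\le\gamma_u$ on the leaders' inputs, and the threshold $c_2\ge n_l\sigma\gamma_u$) are used exactly as in the paper, but your route is genuinely different. The paper does not set up a cascade: it applies the single change of coordinates \eqref{tr}, which, much like your splitting, separates the combined observer error from the estimate-based tracking error, and then runs one Filippov--Lyapunov argument with the composite weight $\bar H=\mathrm{diag}(\tau H,P^{-1})$, proving $\bar H\bar G+\bar G^{\top}\bar H+c_1\bar H\bar S+c_1\bar S^{\top}\bar H<0$ by taking $\tau$ large as in \eqref{pa} and invoking the Schur complement, so that $\dot{\tilde V}_2<0$ holds outright. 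The payoff of the paper's coordinates is that the signum argument coincides exactly with $\bar E^{\top}\bar H\bar e_f$, so the discontinuous term collapses cleanly to $-c_2\sum_j\|\bar e_{j+n_l}^{\top}\bar H\bar E\|_1$ and cancels the $u_l$ perturbation with no residue. In your cascade you instead face $-v^{\top}\mathrm{sgn}(v+w)$ with $v=(I_{n_f}\otimes K)e_{x_f}$ and $w=(I_{n_f}\otimes K)\psi$, which must be patched via $-v^{\top}\mathrm{sgn}(v+w)\le-\|v\|_1+2\|w\|_1$, leaving an exponentially decaying residue that you then kill with a comparison-lemma (rather than LaSalle/Barbalat) step on $\dot V\le -cV+g(t)$. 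You flag exactly this as the main obstacle, and it does close, so there is no gap --- but you should write that inequality out explicitly, since it is the one place where your argument needs something the paper's does not. What your version buys is a separation-principle reading (observer convergence established independently of the tracking loop and of $c_2$) at the cost of the vanishing-perturbation bookkeeping; what the paper's version buys is a direct $\dot{\tilde V}_2<0$ at the cost of the extra tuning parameter $\tau$.
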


\begin{proof}
Let 
\begin{equation}\label{tr}
\bar e_f \!=\! (I_{n_f} \otimes \left[\begin{array}{ll}
   I_{3m} & -I_{3m} \\
    \mathbf{0}  &  \ I_{3m}
    \end{array}\right]) e_f. 
\end{equation}

Then, \eqref{tr1} becomes
\begin{equation}\label{tr2}
\begin{array}{ll}
     & \hspace*{-0.3cm} \dot {\bar e}_f =
   (I_{n_f} \otimes \bar G \!+\! c_1I_{n_f} \otimes \bar S) \bar e_f \!+\!\! (\Omega_{f\!f}^{-1}\Omega_{fl} \otimes \bar E)  u_l \\
     & \ \ \ \ + c_2 (I_{n_f} \otimes \bar E ) {D},
\end{array}
\end{equation}    
where
\begin{equation}
\bar G = \left[\begin{array}{ll}
    A\!+\!LC & \mathbf{0} \\
    -LC  &  A
    \end{array}\right],
\bar S = \left[\begin{array}{ll}
    LC & \mathbf{0} \\
    -LC  &  BK
    \end{array}\right],  \bar E \!=\! \left[\begin{array}{ll}
   \mathbf{0} \\
    B 
    \end{array}\right].   
\end{equation}
    
Consider a Lyapunov function:
\begin{equation}\label{lff}
    V_2 = \frac{1}{2}\bar e_f^{\top}(I_{n_f}\otimes \bar H) \bar e_f,
\end{equation}
where  
\begin{equation}\label{fel}
    \begin{array}{ll}
    & \bar H = \left[\begin{array}{ll}
    \tau H & \mathbf{0} \\
    \mathbf{0} &  P^{-1}
    \end{array}\right]
    \end{array}
\end{equation}
with $\tau$ being a positive constant. Based on the properties of $\mathcal{K[\cdot]}$ \cite{paden1987calculus}, the set-valued Lie derivative of $V_2$ is 
\begin{equation}\label{ve2}
\begin{array}{ll}
      & \dot{\tilde{V}}_2 = \mathcal{K}[\frac{1}{2} \bar e_f^{\top}[ I_{n_f} \otimes ( \bar H \bar G  \!+\!  \bar G^{\top} \bar H  \!+\! c_1 \bar H \bar S \!+\! c_1  \bar S^{\top} \bar H)] \bar e_f \\
     & \ \ \ \ \ \ + \bar e_f^{\top}(\Omega_{f\!f}^{-1}\Omega_{fl} \otimes \bar H \bar E)u_l \!+\! c_2 \bar e_f^{\top}(I_{n_f}\otimes \bar H \bar E){D}]. \\
\end{array}
\end{equation}

Since $D = -\text{sgn}[(I_{n_f}\otimes  \bar E^{\top} \bar H) \bar e_f]$, it has 
\begin{equation}\label{te2}
 c_2 \bar e_f^{\top}(I_{n_f}\otimes \bar H \bar E)\mathcal{D} =  -c_2 \sum\limits_{j=1}^{n_f} \|\bar e_{j\!+\!n_l}^{\top} \bar H \bar E \|_1.
\end{equation}

Based on \eqref{te2} and the fact that  $\mathcal{K}[f]=\{ f\}$ if $f$ is continuous \cite{paden1987calculus}, \eqref{ve2} becomes
\begin{equation}\label{vve2}
\begin{array}{ll}
     &   \dot{\tilde{V}}_2   =  \frac{1}{2} \bar e_f^{\top}[ I_{n_f} \otimes ( \bar H \bar G  \!+\!  \bar G^{\top} \bar H  \!+\! c_1 \bar H \bar S \!+\! c_1  \bar S^{\top} \bar H)] \bar e_f \\
     & \ \ \ \ \ \ + \bar e_f^{\top}(\Omega_{f\!f}^{-1}\Omega_{fl} \otimes \bar H \bar E)u_l \!-\! c_2 \sum\limits_{j=1}^{n_f} \|\bar e_{j\!+\!n_l}^{\top} \bar H \bar E \|_1.
\end{array}
\end{equation}

Let $\bar e_f = [\bar e_{n_l+1}^{\top}, \cdots, \bar e_{n}^{\top}]^{\top}$.
Note that $\|u_i\|_2 \le \gamma_u, i \in \mathcal{V}_l$, where $\gamma_u$ is calculated in Appendix. Then, it yields that
\begin{equation}\label{te1}
\begin{array}{ll}
     & \bar e_f^{\top}(\Omega_{f\!f}^{-1}\Omega_{fl} \otimes \bar H \bar E)u_l \\
     & = \bar e_f^{\top} (I_{n_f}\otimes \bar H \bar E) \left[\begin{array}{ll}
    \sum\limits_{j=1}^{n_l}\sigma_{1j}u_j \\
    \ \ \ \ \ \ \vdots \\
     \sum\limits_{j=1}^{n_l}\sigma_{n_fj}u_j 
    \end{array}\right] \\
    & =  
    \sum\limits_{j=1}^{n_f} \bar e_{j\!+\!n_l}^{\top} \bar H \bar E \sum\limits_{k=1}^{n_l}\sigma_{jk}u_k   \le n_l \sigma \gamma_u 
    \sum\limits_{j=1}^{n_f} \|\bar e_{j\!+\!n_l}^{\top} \bar H \bar E \|_1. 
\end{array}
\end{equation}

Combining \eqref{vve2} and \eqref{te1}, it yields that
\begin{equation}\label{ve3}
\begin{array}{ll}
     &  \dot{\tilde{V}}_2 \le  \frac{1}{2} \bar e_f^{\top}[ I_{n_f} \otimes ( \bar H \bar G  \!+\! \bar G^{\top} \bar H  \!+\! c_1 \bar H \bar S \!+\! c_1 \bar S^{\top} \bar H)] \bar e_f \\
     & \ \ \ \ \ \ + (n_l \sigma \gamma_u\!-\!c_2)
    \sum\limits_{j=1}^{n_f} \|\bar e_{j\!+\!n_l}^{\top} \bar H \bar E\|_1 \\
    & \ \ \ \  \le \frac{1}{2} \bar e_f^{\top}[ I_{n_f} \otimes ( \bar H \bar G  \!+\! \bar G^{\top} \bar H  \!+\! c_1 \bar H \bar S \!+\! c_1 \bar S^{\top} \bar H)] \bar e_f.
\end{array}
\end{equation}

For the term $ \bar H  \bar G  \!+\! \bar G^{\top} \bar H  \!+\! c_1 \bar H \bar S \!+\! c_1 \bar S^{\top} \bar H $ in \eqref{ve3}, it has
\begin{equation}\label{ve4}
\begin{array}{ll}
     &  \left[\begin{array}{ll}
    I_3 &  \mathbf{0} \\
     \mathbf{0}  &  P
    \end{array}\right]( \bar H \bar G  \!+\! \bar G^{\top} \bar H   \!+\! c_1 \bar H \bar S \!+\! c_1 \bar S^{\top} \bar H) \left[\begin{array}{ll}
    I_3 &  \mathbf{0} \\
     \mathbf{0}  &  P
    \end{array}\right] \\
     &  \!=\!  \left[\begin{array}{ll}
    \tau U & -(c_1\!+\!1)C^{\top}L^{\top} \\
    -(c_1\!+\!1)LC  &   AP\!+\!PA^{\top}\!-\!2c_1BB^{\top}
    \end{array}\right], 
\end{array}
\end{equation}
where
\begin{equation}
U \!=\!  HA \!+\!A^{\top}H\!-\!2(c_1\!+\!1)C^{\top}C \le HA\!+\!A^{\top}H\!-\!2C^{\top}C<0.
\end{equation}

It is concluded from \eqref{tee1} that  the matrices $AP\!+\!PA^{\top}\!-\!2c_1BB^{\top}\!<\!0$.
Moreover,
since the matrix $C^{\top}L^{\top}(AP\!+\!PA^{\top}\!-\!2c_1BB^{\top})^{-1} LC$ is a real symmetric matrix, it is diagonalizable, and there must be an orthogonal matrix $F$ satisfying
\begin{equation}
    C^{\top}L^{\top}(AP+PA^{\top}-2c_1BB^{\top})^{-1} LC  = F \Lambda F^{\top},
\end{equation}
where $\Lambda$ is a diagonal matrix. Then, it has
\begin{equation}\label{tee2}
\begin{array}{ll}
     &      \tau U -  (c_1\!+\!1)^2C^{\top}L^{\top}(AP+PA^{\top}-2c_1BB^{\top})^{-1} LC   \\
     & =  F[\tau F^{\top}UF- (c_1+1)^2\Lambda]F^{\top}.
\end{array}
\end{equation} 
where  $\tau \in \mathbb{R}$. It follows \eqref{tee1} and \eqref{tee2} that the parameter $\tau$ can be chosen as
\begin{equation}\label{pa}
 \tau > \frac{(c_1\!+\!1)^2\lambda_{\min}(\Lambda)}{\lambda_{\max}(F^{\top}UF)}. 
\end{equation}

Then, it has $F[\tau F^{\top}UF- (c_1+1)^2\Lambda]F^{\top}<0$. Since $ AP\!+\!PA^{\top}\!-\!2c_1BB^{\top} <0$ \eqref{tee1} and $F[\tau F^{\top}UF- c_1^2\Lambda]F^{\top}<0$ \eqref{tee2}, 
it is clear from Schur Complement Lemma \cite{boyd1994linear} that the matrix ${\bar H} \bar G  \!+\! \bar G^{\top} {\bar H} \!+\! c_1 {\bar H} \bar S\!+\! c_1 \bar S^{\top} {\bar H} < 0$. 
As a result, \eqref{ve3} becomes
\begin{equation}\label{ve5}
\begin{array}{ll}
     & \dot{\tilde{V}}_2 <0, \ \ \text{if} \ \ \bar e_f \neq \mathbf{0}. 
\end{array}
\end{equation}

Note that $ \bar e_f \!=\! (I_{n_f} \otimes \left[\begin{array}{ll}
   I_{3m} & -I_{3m} \\
    \mathbf{0}  &  \ I_{3m}
    \end{array}\right]) e_f \neq \mathbf{0}$ if $e_f \neq \mathbf{0}$. Hence,  $\dot{\tilde{V}}_2 < 0$ if $e_f \neq \mathbf{0}$. From Theorem 3.1 in \cite{shevitz1994lyapunov}, 
the tracking error $ e_f$ of the followers asymptotically converges to zero and the control objective \eqref{obj2} is achieved.
\end{proof}

\begin{remark}
If the matrix $\Omega_{f\!f}$ is designed to be positive definite, each follower can use the edge weight $w_{ij}$ of the follower matrix $\Omega_f$  to achieve formation maneuver control.
\end{remark}

In addition, the distributed output formation maneuver controller of follower $i \in \mathcal{V}_f$ can also be designed based on another follower matrix $\widehat{\Omega}_f$ in \eqref{zero2}, i.e.,
\begin{equation}\label{follower1}
\begin{array}{ll}
     & \dot \eta_i \!=\! A \eta_i \!+\! Bu_i \!+\! L(C\eta_i\!-\!y_i) \\
     & \ \ \ \ \  + c_1L\sum\limits_{j=1}^n \widehat w_{ij} [(C(\eta_i\!-\!\eta_j) \!-\! (y_i\!-\!y_j)],  \\
     &  u_i \!=\! c_1K \sum\limits_{j=1}^n \widehat w_{ij}(\eta_i\!-\!\eta_j)\!+\!c_2\text{sgn}[K \sum\limits_{j=1}^n \widehat w_{ij}(\eta_i\!-\!\eta_j)], 
\end{array}  
\end{equation}
where  $c_1 > \frac{1}{\lambda_{\min}(\Omega_{f\!f}^{\top}\Omega_{f\!f})}$ and the parameters
$ K, L, c_2$ are given in Theorem \ref{theo1}. $\widehat w_{ij}  \in \mathbb{R}$ is the edge weight given in \eqref{fo2}.

\begin{remark}
In controller \eqref{follower} with the follower matrix 
$\bar \Omega_f$ \eqref{zero1}, each follower only obtains information from its neighboring leaders, but in controller \eqref{follower1} with the follower matrix $\widehat{\Omega}_f$ \eqref{zero2}, each follower can not only obtain information from  its neighboring leaders but also its neighboring followers. In applications, the agent dynamics may be more complex. In this case, the proposed formation maneuvering scheme for high-order integrator dynamics can serve as path planner to generate desired trajectories for the low level tracking controller of each agent with complex dynamics \cite{cai2011unmanned}.
\end{remark}

\begin{remark}
Note that the controllers for the followers in \eqref{follower} and \eqref{follower1} rely on the measured output $y_i$. The result of Theorem \ref{theo1} can also be extended to the case when only the relative outputs are available for the followers. In this case, the controller can be modified to the form:
\begin{equation}\label{followerr}
\begin{array}{ll}
     & \dot \eta_{ij} \!=\! A \eta_{ij} \!+\! B(u_i \!-\!u_j) \!+\! L[C\eta_{ij} \!-\! (y_i\!-\!y_j)], \\
     &  u_i \!=\! c_1K \sum\limits_{j=1}^n \bar w_{ij}\eta_{ij}\!+\!c_2\text{sgn}[K \sum\limits_{j=1}^n \bar w_{ij}\eta_{ij}],  \ i \in \mathcal{V}_f,
\end{array}    
\end{equation}
where $\eta_{ij}$ is the relative motion state estimate, and
$y_i\!-\!y_j$ is the relative output.
\end{remark}
 
The proposed output feedback formation maneuver control of high-order multi-agent systems is closed-loop.

\begin{theorem}
Suppose Assumptions \ref{as1}-\ref{as4} hold. The multi-agent system with high-order integrator dynamics achieves semi-global formation maneuver control
under the controllers of the leaders \eqref{vi2} and the controllers of the followers \eqref{follower} or \eqref{follower1} if 
\begin{enumerate}
\item The parameters $\beta_0, \beta_1, \cdots, \beta_{m-1}$ are chosen so that
the matrix $W_1$ is Hurwitz;
\item $c_1 > \max \{1, \frac{1}{\lambda_{\min}(\Omega_{f\!f}^{\top} \Omega_{f\!f})} \}$, and $c_2 \ge n_l \sigma\gamma_u$, where $\sigma = \max\limits_{i,j}\| \sigma_{ij} \|_2$, $\sigma_{ij}$ is the $(i,j)$-th entry of $\Omega_{f\!f}^{-1} \Omega_{fl}$, and $\gamma_u$ is given in Appendix;
    \item $K=-B^{\top}P^{-1}$, where $P>0$ is a positive definite matrix  satisfying the following linear matrix inequality 
    \begin{equation}
        AP +PA^{\top}-2BB^{\top}<0.
    \end{equation}
\item  $L= - H^{-1}C^{\top}$, where $H>0$ is a positive definite matrix satisfying the following linear matrix inequality 
    \begin{equation}\label{hur}
        A^{\top}H +HA-2C^{\top}C <0.
    \end{equation}
\end{enumerate}
\end{theorem}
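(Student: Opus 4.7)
The plan is to show that this theorem is essentially a composition of Theorem \ref{sth1} (leader subsystem) and Theorem \ref{theo1} (follower subsystem), and that the control objectives \eqref{obj1} and \eqref{obj2} in Definition \ref{defi1} follow from the two convergence results together. First I would verify that conditions (1) and (4) of the present theorem are exactly the hypotheses of Theorem \ref{sth1}: the Hurwitz property of $W_1$ guarantees that the upper-triangular block in \eqref{clo2} has a Hurwitz diagonal block, and the LMI \eqref{hur} together with $L=-H^{-1}C^{\top}$ provides a Lyapunov certificate for the observer error block $A+LC$. Invoking Theorem \ref{sth1} then yields $\lim_{t\to\infty}e_i(t)=\mathbf{0}$ for every $i\in\mathcal{V}_l$, i.e.\ \eqref{obj1}, and also delivers the uniform bound $\|u_i(t)\|_2\le \gamma_u$ derived in the appendix, which is the key quantitative output passed to the follower analysis.

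Next, I would feed the leader bound into the follower analysis via condition (2). Since the bound $c_2\ge n_l\sigma\gamma_u$ uses the very same $\gamma_u$, the hypotheses of Theorem \ref{theo1} are met verbatim for the follower controller \eqref{follower} provided $c_1>1$; for the alternative follower controller \eqref{follower1} built from $\widehat{\Omega}_f=\Omega_{f\!f}^{\top}\Omega_{f\!f}$, the threshold becomes $c_1>1/\lambda_{\min}(\Omega_{f\!f}^{\top}\Omega_{f\!f})$, so taking $c_1>\max\{1,1/\lambda_{\min}(\Omega_{f\!f}^{\top}\Omega_{f\!f})\}$ simultaneously covers both choices. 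Together with conditions (3)–(4), which supply the Lyapunov-type matrices $P$ and $H$ used to build $\bar H$ in \eqref{fel}, Theorem \ref{theo1} then gives $\lim_{t\to\infty}e_f(t)=\mathbf{0}$, which is precisely \eqref{obj2} by the definition of $e_f$ in \eqref{leff}.

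Combining the two limits, and invoking Lemma \ref{ll21} together with Assumption \ref{as1} so that Definition \ref{defi} and Definition \ref{defi1} coincide, yields semi-global formation maneuver control in the sense of Definition \ref{defi}: the convergence holds for arbitrary bounded initial set $\mathcal{X}_l$ because Theorem \ref{sth1} gives exponential (hence global) convergence for the leaders, and the follower Lyapunov argument in Theorem \ref{theo1} works for any bounded $e_f(0)$, with initial boundedness on $\mathcal{X}_l$ propagated through \eqref{re} to the followers' desired states.

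The main obstacle, and essentially the only place where care is needed, is the interface between the two subsystems: one must make sure that the bound $\gamma_u$ invoked in the follower design really is the same constant produced by the leader analysis, and in particular that the leaders' tracking transient does not produce a spike in $\|u_i(t)\|_2$ that violates $c_2\ge n_l\sigma\gamma_u$ during the finite-time window before the leaders settle. This is where the appendix's uniform-in-$t$ derivation of $\gamma_u$ (using Assumption \ref{as4} on boundedness of ${p^*_i}^{(k)}$ and the exponential bound on $e_i,e_{\eta_i}$ via $\mathrm{exp}(Wt)$ in \eqref{ww2}) is essential; since that bound is uniform for all $t\ge 0$ and all initial conditions in $\mathcal{X}_l$, the follower Lyapunov inequality \eqref{ve3} continues to hold throughout, and the semi-global claim follows without any additional gain-scheduling or time-scale separation argument.
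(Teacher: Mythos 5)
Your proposal is correct and follows essentially the same route as the paper: the paper's proof likewise decomposes the system into the leader and follower subsystems, reuses the matrix inequalities $\mathcal{\bar M}<0$ and $\bar H\bar G+\bar G^{\top}\bar H+c_1\bar H\bar S+c_1\bar S^{\top}\bar H<0$ established in Theorems \ref{sth1} and \ref{theo1}, and closes the loop through the uniform bound $\gamma_u$ and Definition \ref{defi1}. The only cosmetic difference is that the paper packages the argument as a single combined Lyapunov function $V_3=V_1+V_2$ (and $V_4$ weighted by $\Omega_{f\!f}^{\top}\Omega_{f\!f}$ for the controller \eqref{follower1}) with a set-valued Lie derivative, whereas you present it as a cascade of the two theorems; since the leader dynamics are autonomous of the followers, the two formulations are equivalent.
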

\begin{proof}
For the controllers \eqref{vi2} and  \eqref{follower},
consider a Lyaponov function \begin{equation}
    V_3 = V_1+ V_2,
\end{equation}
where $V_1$ and $V_2$ are given in \eqref{leav} and \eqref{lff}.  Based on the properties of $\mathcal{K[\cdot]}$ \cite{paden1987calculus}, the set-valued Lie derivative of $V_3$ 
is
\begin{equation}
\begin{array}{ll}
     & {\dot {\tilde V}}_3 \le \sum\limits_{i=1}^{n_l} \left[\begin{array}{ll}
     e_i^{\top} & e_{\eta_i}^{\top}  
    \end{array}\right] \mathcal{\bar M} \left[\begin{array}{l}
     e_i  \\
    e_{\eta_i} 
    \end{array}\right] \\
     & \ \ \ \ + \frac{1}{2} \bar e_f^{\top}[ I_{n_f} \otimes ( {\bar H} \bar G  \!+\! \bar G^{\top} {\bar H} \!+\! c_1 {\bar H} \bar S\!+\! c_1  \bar S^{\top} {\bar H})] \bar e_f,
\end{array}
\end{equation}
where $\mathcal{\bar M}$ is given in \eqref{leam}. From the proof of Theorem \ref{sth1}, 
$\mathcal{\bar M} <0$.
From the proof of Theorem \ref{theo1}, 
${\bar H} \bar G  \!+\! \bar G^{\top} {\bar H} \!+\! c_1 {\bar H} \bar S\!+\! c_1  \bar S^{\top} {\bar H} <0$. 
Then, it yields
    \begin{equation}
       {\dot {\tilde V}}_3 < 0, \ \text{if} \  e_i, e_{\eta_i}  \neq \mathbf{0}, i=1, \cdots, m, \ \bar e_f \neq \mathbf{0}.
    \end{equation}

Note that $ \bar e_f  \neq \mathbf{0}$ if $e_f \neq \mathbf{0}$.
From Theorem 3.1 in \cite{shevitz1994lyapunov}, the tracking errors $ e_i, i=1, \cdots, m$ and $e_f$ asymptotically converge to zero, i.e., the control objectives \eqref{obj1} and \eqref{obj2} are both achieved. Note that $c_2 \ge n_l \sigma\gamma_u$. From Appendix,
 the parameter $\gamma_u$ is determined by the bounded initial 
motion states of the leaders. Thus, from Definition \ref{defi1}, the high-order multi-agent systems achieve semi-global formation maneuver control. Moreover,
the following 
Lyapunov function \eqref{lff1} can be used to 
prove the convergence of the controllers \eqref{vi2} and  \eqref{follower1}. 
\begin{equation}\label{lff1}
\begin{array}{ll}
     &  V_4 = \sum\limits_{i=1}^{n_l} \left[\begin{array}{ll}
     e_i^{\top} & e_{\eta_i}^{\top}  
    \end{array}\right] \left[\begin{array}{ll}
     \mathcal{M} \otimes I_3 &  \mathbf{0} \\
     \mathbf{0}  &  \xi H
    \end{array}\right] \left[\begin{array}{l}
     e_i  \\
    e_{\eta_i} 
    \end{array}\right]  \\
     &  \ \ \ \ \ \  + \frac{1}{2}\bar e_f^{\top}(\Omega_{f\!f}^{\top}\Omega_{f\!f}\otimes \bar H) \bar e_f.
\end{array}
\end{equation}
\end{proof}

\section{Comparison with Existing formation maneuver control of High-order multi-agent systems}\label{discu}

In this section, the proposed method is compared with the existing formation maneuver control of high-order multi-agent systems \cite{chen2020distributed, xu2020affine}.

\begin{defn}\label{def1}
A nominal configuration $r=(r_1^{\top}, \cdots, r_{n}^{\top})^{\top}$ is said to be generic if the coordinates $r_1, \cdots, r_n$ do not satisfy any nontrivial algebraic equation with integer coefficients, e.g., any three nominal agents are non-colinear and any four nominal agents are non-coplanar \cite{zhao2018affine}.
\end{defn}

In practice, the leaders are usually required to start formation maneuver control with bounded initial motion states. Hence, the parameter $\gamma_u$ is known before the formation maneuver control. The edge weights of $\Omega_{f\!f}$, $\Omega_{fl}$ in \eqref{aef} and 
the parameter $\gamma_u$  are constant global information. The maneuver parameters $a(t)$, $Q(t)$, $b(t)$, and $g(t)$ in \eqref{ti} are only available to the leaders. 
Next, five major differences with the existing formation maneuver control of high-order multi-agent systems  \cite{chen2020distributed, xu2020affine} will be presented.

(\romannumeral1) The first difference is the condition. The works in \cite{chen2020distributed, xu2020affine} require
    a affinely localizable
    nominal formation $(\mathcal{G}, r)$, and their desired formation is given by
    \begin{equation}\label{gebe}
      p^*(t)= [I_n \otimes \digamma(t)]r+ {\mathbf{1}}_n \otimes b(t),
\end{equation}
where $\digamma(t)$ is the square matrix. It is proved in \cite{zhao2018affine} that affine localizability of a nominal formation is guaranteed if the nominal configuration $r$ of formation is generic. From the comment of Assumption $2$ in \cite{zhao2018affine}, the affine localizability of a nominal formation may also be guaranteed by non-generic configurations, 
but no theoretical proof is given to verify the relationship between non-generic configurations and localizability. Two cases are given below.

(a) If $\digamma(t)=a(t)Q(t)$, where $a(t) \in \mathbb{R}$ and $Q(t) \in SO(3)$, the shape of their desired formation $p^*(t)$ in \eqref{gebe} remains unchanged, i.e., the formation shapes of $p^*(t)$ and $r$ are the same. \eqref{gebe} becomes 
  \begin{equation}\label{gebe1}
      p^*(t)= a(t)[I_n \otimes Q(t)]r+ {\mathbf{1}}_n \otimes b(t).
\end{equation}

Since their nominal configuration $r$ in \eqref{gebe1} must be designed to be generic to ensure affine localizability,  their desired formation $p^*(t)$ in \eqref{gebe1} must be generic.
Note that the proposed nominal configuration $r$ can be designed to be non-generic by using the proposed rigidity theory \cite{fang2020} and follower matrix.

(b) If $\digamma(t) \neq a(t)Q(t)$, 
the shape of their desired formation $p^*(t)$ in \eqref{gebe} is changed, i.e., the formation shapes of $p^*(t)$ and $r$ are different. Note that they still need a nominal generic configuration $r$ to design the time-varying formation shape, but the proposed time-varying formation shape can be designed based on either generic or non-generic nominal configurations shown
in 
\eqref{ti}, i.e.,
\begin{equation}\label{mti}
    p^*(t)= a(t)[I_n \otimes Q(t)]g(t)+ {\mathbf{1}}_n \otimes b(t),
\end{equation}
where $g(t)= [g_l^{\top}(t), g_f^{\top}(t)]^{\top}$ is the time-varying formation shape.  $g(t)$ is designed by $(\Omega_{fl}\otimes I_3) g_l(t)+(\Omega_{f\!f} \otimes I_3)g_f(t)  = \mathbf{0}$, where the matrices $\Omega_{fl}$ and $\Omega_{f\!f}$ are calculated by either generic or non-generic nominal configurations shown in \eqref{element}.

(\romannumeral2) The second difference is the theory. The controllers in \cite{chen2020distributed, xu2020affine} are designed based on the affine theory, while the controllers in this work are designed based on the proposed rigidity theory \cite{fang2020}. As stated in (\romannumeral1), their theories \cite{chen2020distributed, xu2020affine} can only be applied to generic nominal configurations.

(\romannumeral3) The third difference is the application scenario. Although the affine-theory-based methods \cite{chen2020distributed, xu2020affine} can change the formation shape, the design of formation shape may influence the design of rotation of formation  because the shape and rotation of formation are determined by the same matrix. For example, if it is required to change the shape of formation, part of elements of this matrix will be used to change the shape of formation, and there are insufficient elements of this matrix that can be used to change the rotation of formation. Note that the shape change of formation in the affine-theory-based methods \cite{chen2020distributed, xu2020affine} is usually "shearing and colinearity". Different from \cite{chen2020distributed, xu2020affine}, 
the shape and rotation of formation are tuned by using two independent maneuver parameters $g(t)$ and $Q(t)$. That is,  
the design of formation shape will not influence the design of rotation of formation.  In addition, how to realize different types of formation shapes is studied in \eqref{vcon1}, where the maneuver parameters $g(t)$ represents the target shape of formation. Under Assumption \ref{as1}, given any $g_l(t)$, there exists a solution $g_f(t)$ to \eqref{vcon1}. That is, the formation shape $g(t)$ is determined by the design of $g_l(t)$.

(\romannumeral4) The fourth difference is the measurement.
The works in \cite{chen2020distributed, xu2020affine} assume that relative motion states are available. Although the relative positions may be measured by vision technology, it is difficult to measure the relative velocities and
relative accelerations in real applications, let alone the higher-order relative motion state in higher-order multi-agent systems. Hence,
the proposed output feedback formation maneuver controllers are more practical because it only requires (relative) output measurements instead of (relative) state measurements to achieve formation maneuver control in high-order multi-agent systems. If the relative motion states are available, the proposed controllers in \eqref{follower} can be modified as
\begin{equation}\label{rea}
       u_i \!=\! c_1K \sum\limits_{j=1}^n \bar w_{ij}(x_i\!-\!x_j)\!+\!c_2\text{sgn}[K \sum\limits_{j=1}^n \bar w_{ij}(x_i\!-\!x_j)],
\end{equation}
where the parameters $K, c_1, c_2$ are given in Theorem \eqref{theo1}.
\begin{remark}
In the case that the relative motion states are available, compared with existing works in \cite{chen2020distributed, xu2020affine}, the proposed method still has advantages, i.e., it is a closed-loop formation maneuver control and can be applied to non-generic nominal configurations. Moreover, the global stability of the proposed method can be achieved if the control inputs of the leaders are assumed to have a upper bound. The followers under the controllers \eqref{rea} can form the target time-varying formation by only using relative motion state measurements with respect to their neighbors and do not need to communicate with their neighbors to exchange information.
\end{remark}

(\romannumeral5) The fifth difference is the constraint on the leaders.
In the controllers of the followers,
the works in \cite{chen2020distributed, xu2020affine} assume that the  control inputs and tracking errors of the leaders are zero at any time instant. Then, to design controllers of the followers,
the high-order multi-agent system in \cite{chen2020distributed, xu2020affine} is assumed to be
\begin{equation}\label{open1}
\begin{array}{ll}
     & \dot x_i = Ax_i, \ \  x_i=x_i^*,  \ \   i \in \mathcal{V}_l, \\
     & \dot x_i = Ax_i+Bu_i, \ \   i \in \mathcal{V}_f,
\end{array}
\end{equation}
where  $A, B$ are given in \eqref{ab1}.
That is, the leader group and follower group in \cite{chen2020distributed, xu2020affine} are decoupled as two independent subsystems such that the closed-loop convergence cannot be achieved. {It is intuitive that the control inputs of the leaders cannot be assumed to be zero at any time instant because the leaders need to form the time-varying desired formation by their non-zero control inputs, i.e., the condition $ u_i(t)=0, x_i(t)= x_i^*(t), i \in \mathcal{V}_l$ in \eqref{open1} may not hold.  }
Hence, it is more practical to consider
the proposed closed-loop formation maneuver control shown as
\begin{equation}
\dot x_i = Ax_i+Bu_i, \ \   i \in \mathcal{V}_l \cup \mathcal{V}_f,
\end{equation}
where the control inputs and tracking errors of the leaders are not assumed to be zero at any time instant.

\begin{figure*}[t]
\centering
\includegraphics[width=0.9\linewidth]{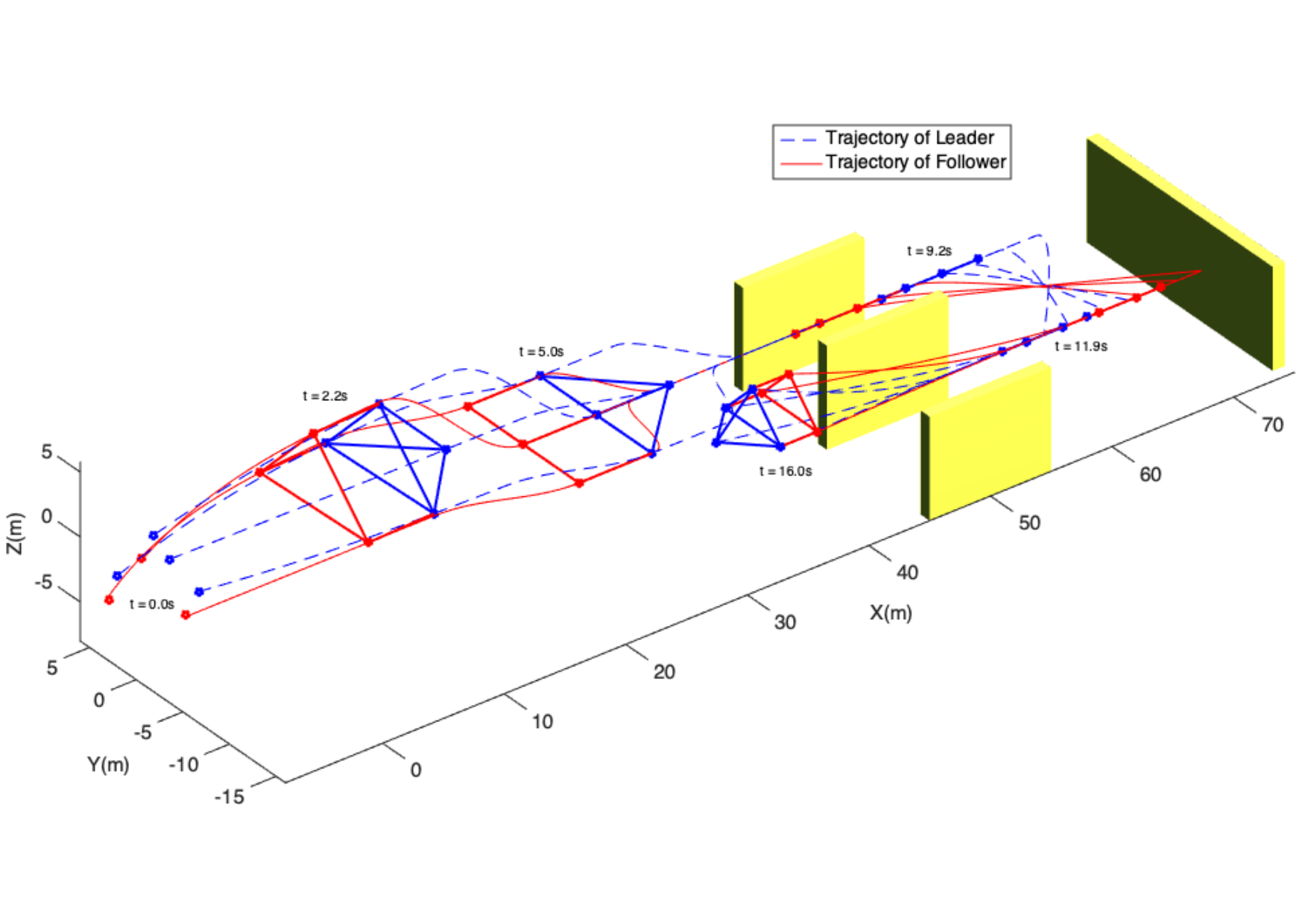}
\caption{ Trajectories of the leaders and followers.}
\label{2d1}
\end{figure*}

\begin{figure}[t]
\centering
\includegraphics[width=0.9\linewidth]{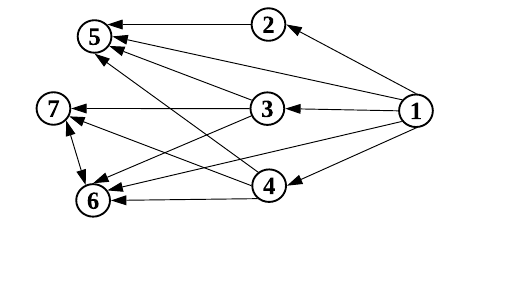}
\caption{Graph condition. }
\label{3d}
\end{figure}

\begin{figure}[t]
\centering
\includegraphics[width=0.9\linewidth]{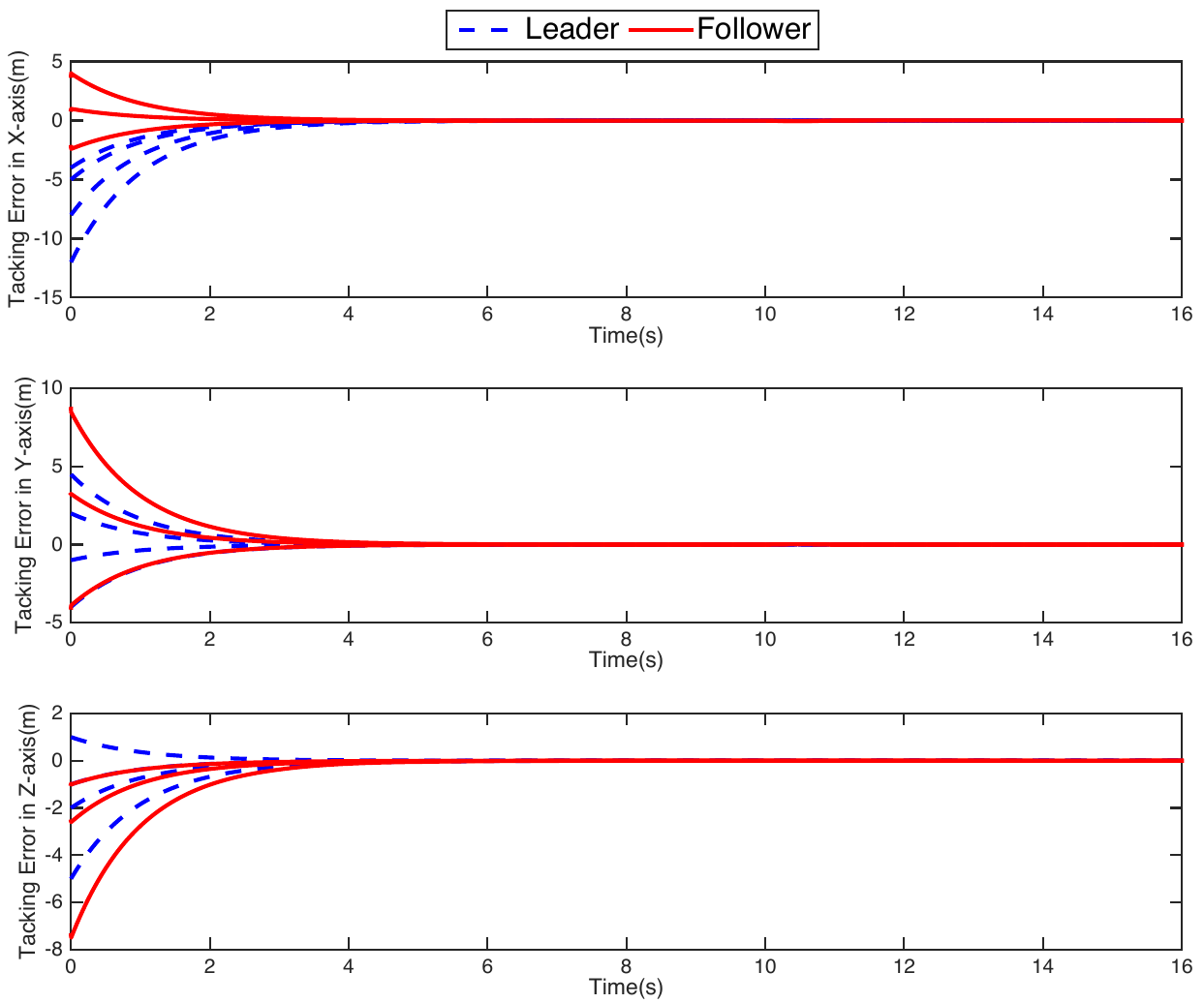}
\caption{Tracking errors.}
\label{2d2}
\end{figure}

Next, some extensions of the proposed method will be introduced.

(a) The followers' controllers \eqref{follower} or \eqref{follower1} can be applied to general linear multi-agent systems under some additional conditions, i.e., bounded target positions of the leaders and full row rank of the matrix $B$.

(b) The proposed method is also applicable to higher-dimensional spaces, i.e., $p_i \! \in \! \mathbb{R}^d, d \!>\! 3$. For example, if $d\!=\!5$, $p_i \!=\!(p_{i1}, p_{i2}, p_{i3}, p_{i4}, p_{i5})^{\top} \in \mathbb{R}^5$, and $u_i=(u_{i1}, u_{i2}, u_{i3}, u_{i4}, u_{i5})^{\top} \in \mathbb{R}^5$, the $m$th order multi-agent system $ p_i^{(m)} = u_i$ in 5-D space can be decoupled as
    two independent $m$th order multi-agent system in 3-D space, i.e.,
    \begin{equation}\label{inde}
        \begin{array}{ll}
             & p_a^{(m)} = u_a, \ \ p_b^{(m)} = u_b, 
        \end{array}
    \end{equation}
where $p_a= (p_{i1}, p_{i2}, p_{i3})^{\top}, p_b= (p_{i4}, p_{i5},0)^{\top}$,    $u_a= (u_{i1}, u_{i2}, u_{i3})^{\top}$, and $u_b= (u_{i4}, u_{i5},0)^{\top}$. Then,
the control objectives of the two independent subsystems shown in \eqref{inde} can be achieved by
the proposed method if the corresponding sub-matrix of $C$ in each subsystem satisfy \eqref{thu}.

\section{Simulation}\label{simu}

A complex environment is simulated in Fig. \ref{2d1}, where the agents are required to pass through several narrow passages. To accomplish this task, the agents need to change the scale, orientation, translation, and shape of formation. In the simulation, only the leaders can sense and respond to the environment, i.e., the leaders determine the  scale, orientation, translation, and shape of formation. Consider a 3-D formation of four leaders $p_l=[p_1^{\top} ,p_2^{\top}, p_3^{\top},p_4^{\top}]^{\top}$ and three followers  $p_f=[p_5^{\top}, p_6^{\top},p_7^{\top}]^{\top}$. 
The nominal configuration $r\!=\! [r_1^{\top} ,r_2^{\top}, r_3^{\top},r_4^{\top},r_5^{\top},r_6^{\top},  r_7^{\top}]^{\top}$ of formation is given by
\begin{equation}
\begin{array}{ll}
     &  r_1 = [6,0,0]^{\top}, \ \ r_2 = [0,0,6]^{\top}, \ \
      r_3 = [0,6,0]^{\top}, \\
      & r_4 = [0,-6,0]^{\top}, \ \ r_5 = [-6,0,6]^{\top}, \ \ r_6 = [-6,6,0]^{\top}, \\
     & r_7 = [-6,-6,0]^{\top}.
\end{array}
\end{equation}

Based on \eqref{element}, the corresponding $\Omega_{fl}, \Omega_{f\!f}$ in \eqref{aef} are
\begin{equation}\label{pao}
    \Omega_{fl} \!=\! \left[\begin{array}{llll}
   -2 & 2 &   \ \ 1 &    \ \ 1   \\
   \ \ 2 & 0 & -3 & -1 \\
    \ \ 0 & 0 & -1 &   \ \ 1 
    \end{array}\right],  
     \Omega_{f\!f} = \left[\begin{array}{lll}
    -2 & 0 & \ \ 0 \\
    \ \ 0 &  2 & \ \ 0 \\
    \ \ 0 & 1 & -1
    \end{array}\right].
\end{equation}

Let $r_l=[r_1^{\top} ,r_2^{\top}, r_3^{\top},r_4^{\top}]^{\top}$ and  $r_f=[r_5^{\top}, r_6^{\top},r_7^{\top}]^{\top}$.
It is easy to prove that $r_f = -(\Omega_{f\!f}^{-1}\Omega_{fl}\otimes I_3)r_l$, i.e., 
the nominal formation is localizable. Then,  the follower matrix $\widehat \Omega_{f}$ is given by
\begin{equation}\label{simf}
\widehat \Omega_{f} \!=\!  \Omega_{f\!f}^{\top} \Omega_f \!=\!  \left[\begin{array}{lllllll}
   4 & -4 & -2 & -2 & 4 & \ \ 0 & \ \ 0      \\
   4 & \ \ 0 & -7 & -1 & 0 & \ \ 5 & -1 \\
   0 & \ \ 0 & \ \ 1 &  -1 & 0 & -1 & \ \ 1
    \end{array}\right].
\end{equation}

In this simulation, based on the follower matrix $\widehat \Omega_{f}$ \eqref{simf}, the leaders and followers implement the controllers
\eqref{vi2} and  \eqref{follower1} over a directed graph shown in Fig. \ref{3d}.
Some maneuver parameters of the simulation are given below.

(\romannumeral1) During the time interval $t \in [3, 5]$,
the scaling, rotational, and translational maneuver parameters are designed as $ a(t) =1,   Q(t) =I_3,  b(t)= [6, 0, 0]^{\top}t$. The formation shape $g(t)=[g_l^{\top}(t), g_f^{\top}(t)]^{\top}$ is designed to satisfy
\begin{equation}\label{vcon}
(\Omega_{fl}\otimes I_3) g_l(t)+(\Omega_{f\!f} \otimes I_3)g_f(t)  = \mathbf{0},
\end{equation}
where $\Omega_{fl}$ and $\Omega_{f\!f}$ are
given in \eqref{pao}. To make all agents coplanar,  $g_l(t)$ is given by
\begin{equation}\label{po1}
  g_l(t) = \left[\begin{array}{llllllllllll}
   6 \! & \! 0 \!&\! -2 \!&\! 0 \!&\! 0 \!&\! -2 \!&\! 0 \!&\! 6 \!&\! -2 \!&\! 0 \!&\! -6 \!&\! -2 
\end{array}\right]^{\top}.
\end{equation}

Combining \eqref{pao} and \eqref{po1}, it has
\begin{equation}\label{po2}
\begin{array}{ll}
     &  g_f(t) \!=\! -(\Omega_{f\!f}^{-1}\Omega_{fl}\otimes I_3)g_l(t)\\
     & \ \ \ \ \ \ \ \!=\! \left[\begin{array}{lllllllll}
    -6 \!&\! 0 \!&\!  -2  \!&\! -6 \!&\! 6 \!&\! -2 \!&\! -6 \!&\! -6 \!&\! -2 
    \end{array}\right]^{\top}.
\end{array}
\end{equation}

\begin{figure}[t]
\centering
\includegraphics[width=0.9\linewidth]{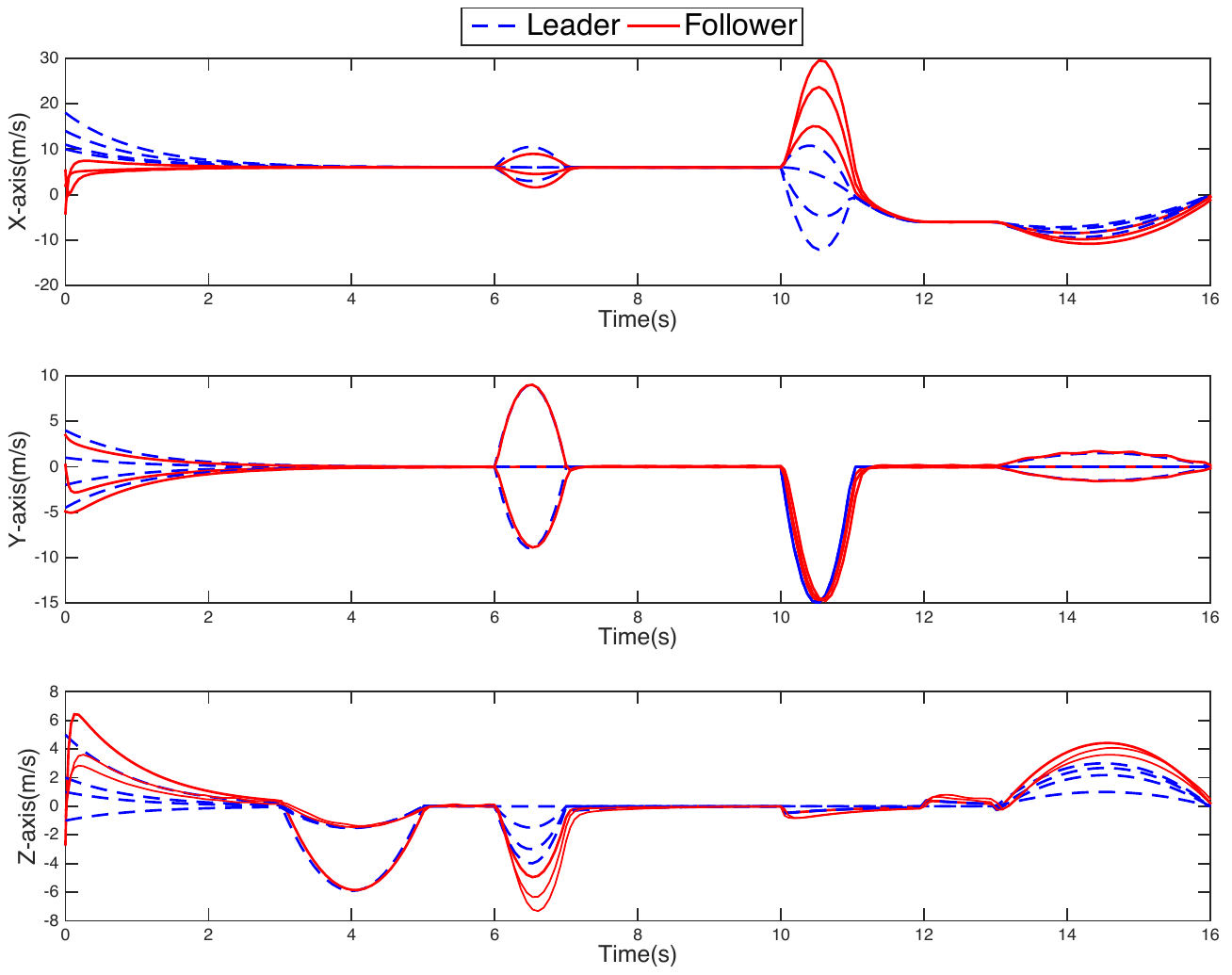}
\caption{Control inputs.}
\label{2d2e}
\end{figure}

Then, the desired formation $p^*(t)$ in \eqref{ti} becomes
\begin{equation}\label{ta1}
 p^*(t)= g(t)+{\mathbf{1}}_7 \otimes \left[\begin{array}{l}
    6 \\
     0 \\
     0
    \end{array}\right]t , \ t \in [3, 5].   
\end{equation}

It is shown in Fig. \ref{2d1} that all the agents are coplanar during the time interval $t \in [3, 5]$.

(\romannumeral2) During the time interval $t \in [7, 10]$, 
the agents pass through the first narrow passage by making all agents colinear. The scaling, rotational, and translational maneuver parameters are designed as $ a(t) =1,   Q(t) =I_3,  b(t)= [6, 0, 0]^{\top}t$. The formation shape $g(t)=[g_l^{\top}(t), g_f^{\top}(t)]$ is also designed satisfying \eqref{vcon}, i.e,
\begin{equation}
 g_l(t) = \left[\begin{array}{llllllllllll}
   6 \! & \! 0 \!&\! -2 \!&\! 3 \!&\! 0 \!&\! -2 \!&\! 0 \!&\! 0 \!&\! -2 \!&\! -2 \!&\! 0 \!&\! -2 
\end{array}\right]^{\top}, 
\end{equation}
and
\begin{equation}
\begin{array}{ll}
     &  g_f(t) \!=\! -(\Omega_{f\!f}^{-1}\Omega_{fl}\otimes I_3)g_l(t)\\
     & \ \ \ \ \ \ \ \!=\! \left[\begin{array}{lllllllll}
    -4 \!&\! 0 \!&\!  -2  \!&\! -7 \!&\! 0 \!&\! -2 \!&\! -9 \!&\! 0 \!&\! -2
    \end{array}\right]^{\top}. 
\end{array}
\end{equation}

Then, the desired formation $p^*(t)$ in \eqref{ti} becomes
\begin{equation}\label{ta2}
 p^*(t)= g(t)+{\mathbf{1}}_7 \otimes \left[\begin{array}{l}
    6 \\
     0 \\
     0
    \end{array}\right]t , \ t \in [7, 10].   
\end{equation}

It is shown in Fig. \ref{2d1} that all the agents are colinear during the time interval $t \in [7, 10]$.
Similarly, the agents can pass through the second narrow passage by designing the maneuver parameters $a(t)$, $Q(t)$, $b(t)$, and $g(t)$ during the time interval $t \in [10, 16]$. 
It is clear in
Fig. \ref{2d2} that the tracking errors asymptotically converge to zero, where the corresponding control inputs are given in Fig. \ref{2d2e}. More information about the simulation can be found in \url{https://youtu.be/DlW6V_lDqM0}. 

\begin{remark}
Although the inter-agent collision avoidance is not considered, the maneuver parameters $g(t)$, $a(t)$, $Q(t),$ $b(t)$ can be designed based on the initial tracking errors of the agents to avoid inter-agent collision as shown in \cite{zhao2019bearing12}.
\end{remark}

\section{Conclusion}\label{conc}

In this article, the high-order formation maneuver control problem is addressed, where the  maneuver parameters are only known to the leaders. The proposed algorithm does not need to design estimators for the followers to estimate the maneuver parameters only known to the leaders. The nominal configuration of formation can be either generic or non-generic,
and 
the tracking errors asymptotically converge to zero. Moreover, 
the proposed method shows how to realize different types of formation shape, which can also be extended to general linear multi-agent systems under some conditions.

\section*{Appendix}

The matrix $W \in \mathbb{R}^{6m \times 6m}$ in \eqref{ww1} can be expressed as 
\begin{equation}\label{pat}
W =  M J M^{-1},      
\end{equation}
where $J$ is the Jordan form of $W$ and $M$ is a nonsingular matrix.
Then, it has
\begin{equation}\label{ww4}
    \text{exp}(Wt) \!=\! M \text{exp}(Jt) M^{-1}.
\end{equation}

Since the matrix $W$ is Hurwitz, the eigenvalues of $J$ are negative. From the matrix theory \cite{horn2012matrix}, it has
\begin{equation}\label{ww3}
 \|\text{exp}(Jt)\|_2 \le  \psi, \ t>0,
\end{equation}
where 
\begin{equation}
\psi =  \sqrt{6m + 3m(6m-1)(\frac{6m-1}{\|\lambda_{\max}(W)\|_2 \text{exp}(1)})^{6m-1}}. 
\end{equation}

If the eigenvalues of matrix $W$ are designed to satisfy $\frac{6m-1}{\|\lambda_{\max}(W)\|_2 \text{exp}(1)}<1$, it has
\begin{equation}
 \psi <  \sqrt{18m^2+ 3m}.
\end{equation}

In addition, if the negative eigenvalues of matrix $W$ are designed to be distinct, the matrix $W$ is diagonalizable, i.e, the matrix $J$ in \eqref{pat} is a diagonal matrix. Then, it has $\|\text{exp}(Jt)\|_2 = \text{exp}(\lambda_{\max}(W)t) < 1$. For this special case, the parameter $\psi=1$. 
Combining \eqref{ww2}, \eqref{ww4}, and \eqref{ww3}, 
it yields that
\begin{equation}\label{by}
\| e_i(t) \|_2, \| e_{\eta_i}(t) \|_2  \le \psi\|M\|_2\|M^{-1}\|_2 (\| e_i(0) \|_2 \!+\! \| e_{\eta_i}(0) \|_2).  
\end{equation}

From \eqref{vi2} and \eqref{by}, it has
\begin{equation}\label{bou1}
 \|  u_i(t) \|_2 \le  2\psi \|\beta\|_2 \|M\|_2\|M^{-1}\|_2  (\| e_i(0) \|_2 \!+\! \| e_{\eta_i}(0) \|_2)  \!+\! \gamma_{m},  
\end{equation}
where $\gamma_{m}$ is defined in Assumption \ref{as4}.
In practice, the leaders are usually required to start formation maneuver control with bounded initial motion states, i.e., the upper bound of $e_i(0)$ and $e_{\eta_i}(0)$ is known before the implementation of the formation maneuver control. Let $\zeta$ be the upper bound of $e_i(0)$ and $e_{\eta_i}(0)$, i.e.,
\begin{equation}\label{lu}
 \| e_i(0)\|_2, \| e_{\eta_i}(0) \|_2 \le \zeta, \ \ i \in \mathcal{V}_l.  
\end{equation}

From \eqref{bou1} and \eqref{lu}, it has
\begin{equation}
  \|  u_i(t) \|_2 \le 4 \psi   \zeta \|\beta\|_2 \|M\|_2\|M^{-1}\|_2     \!+\! \gamma_{m}, \ \ t \ge 0.
\end{equation}

Then, define the upper bound of $ u_i(t), i\in \mathcal{V}_l$ as 
\begin{equation}\label{up}
\gamma_u=  4 \psi \zeta \|\beta\|_2 \|M\|_2\|M^{-1}\|_2      \!+\! \gamma_{m}. 
\end{equation}

As stated above, the parameter $\zeta$ is known if the leaders start formation maneuver control with bounded initial motion states, i.e., 
the upper bound $\gamma_u$ in \eqref{up} is also known if the leaders start formation maneuver control with bounded initial motion states.

\ifCLASSOPTIONcaptionsoff
  \newpage
\fi

\balance
\bibliographystyle{IEEEtran}
\bibliography{papers}

\end{document}